\newtheorem{observation}{Observation}
\title{NP-completeness Results for Graph Burning on Geometric Graphs}
\author{Arya Tanmay Gupta\inst{1}
\and Swapnil A. Lokhande\inst{2}
\and Kaushik Mondal\inst{3}}
\institute{Michigan State University, East Lansing, Michigan, USA \and Indian Institute of Information Technology Vadodara, India
\and Indian Institute of Technology Ropar, India}
\authorrunning{A. T. Gupta et al.}
\begin{document}

\maketitle

\begin{abstract}
    Graph burning runs on discrete time steps. The aim is to burn all the vertices in a given graph in the least number of time steps. This number is known to be the burning number of the graph. The spread of social influence, an alarm, or a social contagion can be modeled using graph burning. The less the burning number, the faster the spread.

    Optimal burning of general graphs is NP-Hard. There is a 3-approximation algorithm to burn general graphs where as better approximation factors are there for many sub classes. Here we study burning of grids; provide a lower bound for burning arbitrary grids and a 2-approximation algorithm for burning square grids. On the other hand, burning path forests, spider graphs, and trees with maximum degree three is already known to be NP-Complete. In this article we show burning problem to be NP-Complete on connected interval graphs, permutation graphs and several other geometric graph classes as corollaries.
\end{abstract}

\section{Introduction}

The spread of social influence in order to analyze a social network is an important topic of study \cite{Banerjee2012,Kempe2003,Kempe2005}. Kramer et al. \cite{Kramer2014} have highlighted that the underlying network plays an essential role in the spread of an emotional contagion; they have nullified the necessity of in-person interaction and non-verbal cues. With the aim to model such problems, \textit{Graph Burning} was introduced in \cite{Bonato2016}. Graph burning is also inspired by other contact processes like \textit{firefighting} \cite{Finbow2009}, \textit{graph cleaning} \cite{Alon2009}, and \textit{graph bootstrap percolation} \cite{Balogh2012}. Burning a graph can be used to model the spread of a meme, gossip, or a social contagion, influence or emotion. It can also be used to model the viral infections: the exposure to infections and proliferation of virus.

Graph burning runs on discrete time-steps (or rounds) as follows: in each time-step $t$, first (a) an arbitrary vertex is \textit{burnt} from ``outside'' (it is selected as a \textit{fire source}), and then, (b) the fire spreads to the vertices that are one hop neighbors of the already burnt vertices (burnt by round $t-1$); this process continues till all the vertices of the given graph are burned. Observe that some fire source selected at round $t$ does not spread fire to its one hop neighbors in the same round. The sequence of fire sources, selected one in each round until a graph is completely burnt, is called a \textit{burning sequence} of that graph. The minimum time steps (equivalently, number of fire sources) required to burn a graph $G$ is called the \textit{burning length} or the \textit{burning number} of $G$, and is denoted by $b(G)$. The less the value of $b(G)$, the faster it is to spread the fire, and therefore burn all the vertices in $G$. The graph burning problem is to find an optimal burning sequence for a given graph $G$. At places, we use burning problem to refer the same.

The underlying decision problem for graph burning is as follows: the given input is an arbitrary graph $G$ and an integer $k$, the task is to determine if $G$ can be burned in $k$ or less rounds. Bessy et al. \cite{Bessy2017} showed that optimal graph burning is an NP-Complete problem. They also showed that burning spider graphs, trees with maximum degree three and path forests is NP-Complete. In this article, we study the graph burning problem on \textit{interval graphs} and \textit{grids}. Interval graphs are formed from a set of closed intervals on the real line such that each interval corresponds to a vertex and the vertices corresponding to two such intervals are connected only if they overlap on the real line. Grids are formed by a set of equidistant horizontal and vertical lines intersecting at right angles such that each intersection point corresponds to a vertex and all the (induced) line segments joining those vertices are considered as edges. We prove NP-Completeness results for interval graphs. Our construction and proof technique are similar to \cite{Bessy2017}. We also provide matching bounds for burning grids.

\noindent\textbf{Our Contribution}: We provide a lower bound for the burning number of grids of arbitrary size and a 2-approximation algorithm for burning square grids (\Cref{section:burn-grids}). We prove burning connected interval graphs(\Cref{section:burn-interval-graphs}) and general permutation graphs (\Cref{section:burn-pg}) to be NP-Complete. We also report hardness results on some more graph classes (\Cref{{sec:hardgeometric}}) as corollaries of our constructions and proofs.

\section{Preliminary definitions and symbols}\label{section:PDandS}

We mention below some of the notations used in this article. Let $G$ be a graph. We denote the set of vertices in $G$ by $V(G)$. The distance between two vertices imply the number of edges contained in the shortest path between those two vertices in $G$. The radical center of a graph means the vertex from which the shortest distance to the furthest vertex is minimum. We define $\cup_{\setminus s}$ to be the \textit{left sequential union}. This operation can add a single element to a sequence, or merge two sequences. As an example, let $P = (a,b)$ be a sequence, then after executing the statement $P = P \cup_{\setminus s} (c)$, $P$ becomes $(c, a, b)$. Similarly $\cup_{s/}$ is defined as the \textit{right sequential union}. Let $Q_1$, $Q_2$ be two paths. By \textit{joining} these two paths in order $Q_1$, $Q_2$, we mean adding an edge between the last vertex of $Q_1$ and the first vertex of the $Q_2$. Let $A$ be a set of natural numbers. We denote the sum of all numbers in $A$ as $s(A)$. The largest element in $A$ is denoted by $\max(A)$.

Let $W$ be a non-empty set of vertices such that $W \subset V(G)$. The set of vertices that are at most at a distance $i$ from $W$ in $G$, including $W$ is denoted by $G.N_i[W]$. The set $W$ may be a set containing a single element.
Let $S=(x_1,x_2,...,x_k)$ be a burning sequence of $G$ of size $k$ such that, $x_i$ is chosen as the fire source in round $i$. The \textit{burning cluster} (or simply \textit{cluster}, when it is clear from the context) of a fire source $x_i$ is the set $G.N_{k-i}[x_i]$. Precisely, it is the set of vertices, to whom the fire source $x_i$ is able to spread fire to in the remaining $(k-i)$ rounds. Observe that, $x_i$ would be able to burn all of it's $(k-i)$ hop neighbors. Now it is easy to see that, if $S$ is able to burn $G$ completely, then \cref{equation:WGBC} must hold true \cite{Bessy2017}.
\begin{equation}\label{equation:WGBC}
    G.N_{k-1}[x_1] \cup G.N_{k-2}[x_2] \cup ... \cup G.N_0[x_{k}] = V(G)
\end{equation}

For the NP-Completeness proof, we reduce \textit{ distinct 3-partition problem} to our problem. The input of the  distinct 3-partition problem is a set of distinct natural numbers, $X = \{a_1, a_2, . . ., a_{3n}\}$, such that $\sum_{i=1}^{3n}a_i = nB$ where $\frac{B}{4}<a_i<\frac{B}{2}$. The task is to determine if $X$ can be partitioned into $n$ sets, each containing $3$ elements such that sum of each set equals $B$. Note that, $B$ can only be a natural number as it is a sum of 3 natural numbers. It is well known that the  distinct 3-partition problem is NP-Complete in the strong sense (see \cite{Garey1979,Bessy2017}).

\vspace{-0.3cm}
\section{Related Works}\label{section:RW}
\vspace{-0.3cm}
The burning problem was introduced by Bonato et al. (2016) \cite{Bonato2016}. This work showed that the burning number of a path or cycle of length $p$ is $\left \lceil{\sqrt{p}}\right \rceil $ along with some other properties and results. Bessy et al. \cite{Bessy2017} showed that burning a general graph is NP-Complete: they showed that burning spider graphs, trees, and path-forests are NP-Complete. A 3-approximation algorithm for burning general graphs was described in \cite{Bessy2017}. Bonato et al. \cite{Bonato2019} proposed a $2$-approximation algorithm for burning trees. A 1.5-approximation algorithm for burning path-forests was described in \cite{Bonato2019a}. A 2-approximation algorithm for burning graphs that are bounded by a diameter of constant length was described in \cite{Kamali2020}. There are works providing upper bounds on the burning number of some classes of graphs.
Authors in \cite{sandip2018} as well as \cite{Bonato2019a} showed that burning number of spider graphs of order $n$ is at most $\sqrt{n}$. Bessy et al. \cite{Bessy2018} provided a bound on the burning number of a connected graph of order $n$, and a special class of trees. Simon et al. \cite{Simon2019} presented systems that utilize burning in the spread of an alarm through a network. Simon et al.\cite{Simon2019a} provided heuristics to minimize the time steps in \textit{burning} a graph.
Kamali et al. \cite{Kamali2020} provides upper bound on burning number for the graphs with bounded path length and also for the graphs with minimum degree $\delta$. Along with this, authors in \cite{Kamali2020} (also \cite{Kare2019}) discussed bounds on the burning number of interval graphs and showed almost tight bounds. Although, they have not provided any algorithm to find an optimal burning sequence. Despite the known bounds on the burning number and the fact that most of other properties of interval graphs can be computed in polynomial time, we show that burning connected interval graphs turns out to be NP-Complete. Also we study graph burning on grids which is a graph of constant minimum degree.\\


\vspace{-0.3cm}
\section{Burning grids}\label{section:burn-grids}
\vspace{-0.3cm}
In this section we study graph burning problem on grids by providing a lower bound for grids of arbitrary size and an 2-approximation algorithm for square grids. According to \cite{Kamali2020}, the upper bound on burning number of graphs with constant minimum degree is $O(\sqrt{n})$. Here we provide a better upper bound and a matching lower bound for this specific class of graphs.

First we analyze at most how many nodes can be burnt by an arbitrary fire source inside the grid. We show an example in Figure \ref{figure:grid-example} (a). Let the maximum number of vertices that can be burned by a single fire source in $k$ rounds be denoted by $f_k$. We compute $f_k$ using the recurrence relations as follows.

\begin{center}
    $f_1=1$\\
    $f_k=4(k-1)+f_{k-1}$
\end{center}

At $k^{th}$ time step after a fire source is placed, the number of vertices which can be burned is $1+4+8+12+...+4(k-1)= 1+4(1+2+3+...+(k-1))=1+4\times\frac{k(k-1)}{2}=2k(k-1)+1$.
\vspace{-0.3cm}
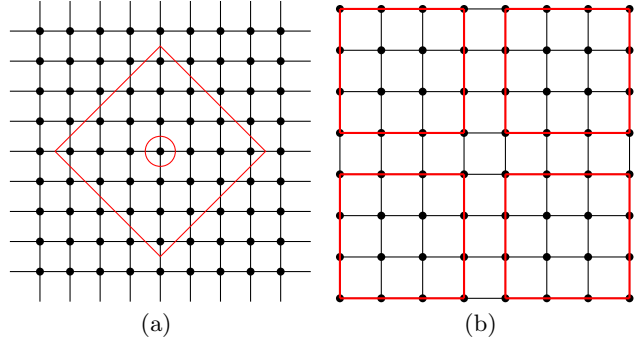
\begin{figure}[h]
    \centering
    \subfigure[]{
    \begin{tikzpicture}[scale=.4]
        \foreach \i in {0,1,...,8}{
            \foreach \j in {0,1,...,8}{
                \node [circle, fill=black, inner sep=0pt, minimum size=3pt] at (\i,\j) {};
            }
        }
        \foreach \i in {0,1,...,8}{
            \draw (\i,-1) -- (\i,9);
        }
        \foreach \i in {0,1,...,8}{
            \draw (-1,\i) -- (9,\i);
        }

        \draw[red] (.5,4) -- (4,.5) -- (7.5,4) -- (4,7.5) -- (.5,4);

        \draw[red] (4,4) circle (.5cm);
    \end{tikzpicture}
    }
    \subfigure[]{
    \begin{tikzpicture}[scale=.55]
        \foreach \i in {0,1,...,7}{
            \foreach \j in {0,1,...,7}{
                \node [circle, fill=black, inner sep=0pt, minimum size=3pt] at (\i,\j) {};
            }
        }
        \foreach \i in {0,1,...,7}{
            \draw (\i,0) -- (\i,7);
        }
        \foreach \i in {0,1,...,7}{
            \draw (0,\i) -- (7,\i);
        }
        \draw[thick,red] (0,0) -- (0,3);
        \draw[thick,red] (0,3) -- (3,3);
        \draw[thick,red] (3,3) -- (3,0);
        \draw[thick,red] (3,0) -- (0,0);

        \draw[thick,red] (0,4) -- (0,7);
        \draw[thick,red] (0,7) -- (3,7);
        \draw[thick,red] (3,7) -- (3,4);
        \draw[thick,red] (3,4) -- (0,4);

        \draw[thick,red] (4,4) -- (4,7);
        \draw[thick,red] (4,7) -- (7,7);
        \draw[thick,red] (7,7) -- (7,4);
        \draw[thick,red] (7,4) -- (4,4);

        \draw[thick,red] (4,0) -- (7,0);
        \draw[thick,red] (7,0) -- (7,3);
        \draw[thick,red] (7,3) -- (4,3);
        \draw[thick,red] (4,3) -- (4,0);
    \end{tikzpicture}
    }
    \caption{(a) On a grid, a fire source (circled) is able to burn $1$ vertex in $1$ round, $5$ vertices in $2$ rounds, $13$ vertices in $3$ rounds, $25$ vertices in $4$ rounds, and so on. (b) An $8\times 8$ grid divided into four $4\times 4$ subgrids.}
    \label{figure:grid-example}
\end{figure}

\noindent \textbf{The Lower Bound:} We prove the following lemma on the lower bound of burning number on any arbitrary grid of size $l\times b$.
\begin{lemma}
    To burn a grid of size $l\times b$ or more, we need a burning sequence containing at least $(l\times b)^\frac{1}{3}$ fire sources.
\end{lemma}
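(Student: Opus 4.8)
The plan is to charge each vertex of the grid to the fire source that burns it and to cap the size of every source's burning cluster by the counts $f_1,f_2,\dots$ computed just above. Concretely, I would fix a burning sequence $S=(x_1,\dots,x_k)$ of a grid $G$ with $n:=|V(G)|\ge l\times b$, and invoke \cref{equation:WGBC}: the clusters $G.N_{k-i}[x_i]$ cover $V(G)$, so $n\le\sum_{i=1}^{k}\bigl|G.N_{k-i}[x_i]\bigr|$. The next step is to observe that the source placed in round $i$ has only $k-i$ further rounds to spread, so its cluster is contained in the radius-$(k-i)$ ``diamond'' around $x_i$; since such a diamond can only lose vertices near the boundary of the grid, $\bigl|G.N_{k-i}[x_i]\bigr|\le f_{k-i+1}=2(k-i+1)(k-i)+1$.

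From there it is bookkeeping: re-indexing by $j=k-i+1$ gives $\sum_{i=1}^{k} f_{k-i+1}=\sum_{j=1}^{k} f_j=\sum_{j=1}^{k}(2j^2-2j+1)=\tfrac{1}{3}(2k^3+k)$ by the usual closed forms for $\sum j$ and $\sum j^2$, and then $\tfrac{1}{3}(2k^3+k)\le k^3$ for every integer $k\ge 1$ because $2k^3+k\le 3k^3$. Chaining the inequalities yields $l\times b\le n\le k^3$, hence $k\ge(l\times b)^{1/3}$; since this holds for an arbitrary burning sequence, it holds in particular for $b(G)$.

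I do not expect a genuine obstacle here — the result is essentially an averaging/volume argument — but two points need care. First, the off-by-one: the source chosen in round $i$ of a length-$k$ sequence gets $k-i$ spreading rounds, so its cluster has the size of a ball of radius $k-i$, i.e.\ at most $f_{k-i+1}$ (not $f_{k-i}$). Second, one must note explicitly that sliding a radius-$r$ diamond so that part of it falls outside the grid never increases the number of grid vertices it covers, which is what makes $f_{k-i+1}$ a legitimate upper bound for sources sitting near an edge or corner. If one instead wants the bound stated for ``a grid of size $l\times b$ or more'', it suffices that such a grid has at least $l\times b$ vertices, which is exactly the inequality $n\ge l\times b$ used above.
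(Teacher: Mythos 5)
Your proposal is correct and follows essentially the same route as the paper: cap each source's cluster by the ball sizes $f_j=2j(j-1)+1$, sum over a length-$k$ sequence to get $\tfrac{2k^3+k}{3}$, compare with the number of grid vertices, and finish with $\tfrac{2k^3+k}{3}\le k^3$. Your treatment is in fact slightly more explicit than the paper's on the indexing (the $i$-th source burns a radius-$(k-i)$ ball, hence at most $f_{k-i+1}$ vertices) and on why boundary sources cannot exceed the interior count, both of which the paper leaves implicit.
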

\vspace{-0.3cm}
\begin{proof}
Consider a grid of size $l\times b$, where $l$ and $b$ are any positive integers. As discussed above, $f_k=2k(k-1)+1$. So, if $i$ be the burning number of the grid, then the total number of vertices that are burned by this burning sequence will be
    $1+5+13+25+...+2i(i-1)+1$
    $=(2(0+2+6+12+20+...+(i^2-i))+i)$
    $=2((1^2-1)+(2^2-2)+(3^2-3)+...+(i^2-i))+i$
    $=2((1^2+2^2+3^2+...+i^2)-(1+2+3+...+i))+i$
    $=2\Big(\frac{i(i+1)(2i+1)}{6}-\frac{i(i+1)}{2}\Big)+i$
    $=\frac{i(i+1)(2i+1)}{3}-i(i+1)+i$
    $=\frac{i(i+1)(2i-2)}{3}+i$
    $=\frac{i(2i^2-2+3)}{3}$
    $=\frac{i(2i^2+1)}{3}$.

A burning sequence of size $i$ shall be able to burn at most $\frac{2i^3+i}{3}$ vertices on the grid. So to burn the grid in $i$ rounds, we must have,
\begin{align}\label{eqn:lowerbound}
\frac{2i^3+i}{3}\geq l\times b
\end{align}

Since $\frac{2i^3+i}{3}\leq i^3 ~~\forall i\geq 1$ and the burning sequence burns all the vertices, we have $i^3 \geq \frac{2i^3+i}{3}\geq l\times b$. This implies $i \geq (l\times b)^\frac{1}{3}$.
\qed\end{proof}
We state the following corollary.
\vspace{-0.2cm}
\begin{corollary}
To burn a square grid of size $n$, the burning number needs to be at least $n^{\frac{1}{3}}$.
\end{corollary}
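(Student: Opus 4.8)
The plan is to obtain this simply as a specialization of the preceding Lemma. First I would fix the convention that a \emph{square grid of size $n$} denotes a grid whose vertex set has cardinality $n$; since such a grid is $\sqrt{n}\times\sqrt{n}$, it falls under the hypothesis of the Lemma with $l=b=\sqrt{n}$. Plugging these values in, the Lemma guarantees that any burning sequence has at least $(l\times b)^{1/3}=(\sqrt{n}\cdot\sqrt{n})^{1/3}=n^{1/3}$ fire sources, which is exactly the claimed bound on $b(G)$.

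Since the Lemma is already established, there is essentially no remaining work: the argument is a one-line substitution and requires no new estimate. The only point worth being careful about — and the closest thing to an ``obstacle'' — is the meaning of ``size $n$''. If one were instead to read it as an $n\times n$ grid, the Lemma would yield the (stronger-looking but differently parametrized) bound $(n^2)^{1/3}=n^{2/3}$; so I would make the intended reading explicit before invoking the Lemma, to keep the statement consistent with how ``order'' is used elsewhere in the paper. With that clarification in place, the corollary follows immediately, and nothing beyond citing the Lemma is needed in the write-up.
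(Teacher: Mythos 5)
Your proposal is correct and matches the paper's treatment: the paper states this corollary without further proof, as an immediate specialization of the preceding lemma, exactly the one-line substitution $l=b=\sqrt{n}$ giving $(l\times b)^{1/3}=n^{1/3}$ that you describe. Your remark about pinning down the meaning of ``size $n$'' (number of vertices, consistent with the $O(n^{1/3})$ upper bound discussion) is a reasonable clarification but introduces nothing beyond what the paper intends.
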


Now we describe the following algorithm to burn an arbitrary grid $G$ of size $l\times b$. We further show that this is a 2-approximation algorithm for burning any $l\times l$ square grid with $l\geq 403$.

\vspace{0.2cm}
\noindent \textbf{The Algorithm:} Divide $G$ into subgrids (see \Cref{figure:grid-example} (b) for example) of dimensions $l^{\frac{2}{3}}\times b^{\frac{2}{3}}$. Represent the resultant subgrids by $g_1,g_2,...,g_k$ where $k$ is the count of subgrids obtained. Let $S$ be the sequence of fire sources, initially empty. For $1\leq i \leq k$, put the radical center $x_i$ of subgrid $g_i$ as the $i$-th fire source in $S$. If $G$ is not completely burnt by those $k$ fire sources, then in each step $i\geq k+1$, continue putting unburnt vertices from $G$ in $S$ until $G$ is completely burnt.

\begin{theorem}\label{theorem:burn-grid-2-approx}
    Our algorithm is able to burn a grid $G$ within an approximation factor of $2$ if $G$ is a $l\times l$ square grid with $l\geq 403$.
\end{theorem}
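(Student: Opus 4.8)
The plan is to sandwich the number of rounds $T$ that the algorithm actually spends between a clean upper bound and the lower bound already proved for $b(G)$, and then show the ratio $T/b(G)$ falls to at most $2$ exactly once $l$ is large. Write $s=\lceil l^{2/3}\rceil$ for the side length (in vertices) of the subgrids, $k=\lceil l/s\rceil^{2}$ for their number, and $r$ for the radius of an $s\times s$ grid graph, so $r\le s$ (with $r=s-1$ when $s$ is odd).

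First I would bound $T$. I claim every vertex of $G$ is burnt by the end of round $T:=k+r$. The radical center $x_i$ of subgrid $g_i$ is inserted as the $i$-th fire source with $i\le k$, so by round $T$ it has been spreading for $T-i=k+r-i\ge r$ steps; since every vertex of $g_i$ is within graph-distance $r$ of $x_i$ (a boundary subgrid is only smaller, hence no harder, and distances in $G$ are at most distances inside $g_i$), we get $g_i\subseteq G.N_{T-i}[x_i]$, and $\bigcup_i g_i=V(G)$. Any further fire sources placed on unburnt vertices in rounds $k+1,\dots$ only help, so the algorithm halts by round $T$ and uses at most $T$ fire sources; hence $b(G)\le T=k+r$. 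Substituting $k\le(l^{1/3}+1)^{2}$ and $r\le l^{2/3}+1$ gives $b(G)\le 2l^{2/3}+2l^{1/3}+c$ for a small explicit constant $c$.

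For the lower bound I would not use the corollary $b(G)\ge l^{2/3}$ as literally stated, which is too weak to absorb the $\Theta(l^{1/3})$ slack above; instead I would return to \Cref{eqn:lowerbound} and note that $b(G)$ is at least the least integer $i$ with $2i^{3}+i\ge 3l^{2}$, which forces $b(G)\ge (3/2)^{1/3}l^{2/3}-o(l^{2/3})$ — concretely something like $\lceil(3l^{2}/2)^{1/3}\rceil-1$. Since $(3/2)^{1/3}>1.14$, this buys a slack of roughly $0.28\,l^{2/3}$ over the trivial bound. Combining the two estimates, the inequality $T\le 2b(G)$ reduces to $2l^{1/3}+c\le\big(2(3/2)^{1/3}-2\big)l^{2/3}-O(1)$, whose left-hand side is $\Theta(l^{1/3})$ and right-hand side $\Theta(l^{2/3})$; it therefore holds for all sufficiently large $l$, and a finite check pins the threshold at $l\ge 403$. (For $l<403$ this particular estimate may simply be lossy, which does not mean the algorithm itself does worse than factor $2$ there.) That the procedure runs in polynomial time is immediate — partitioning the grid and reading off the centers of the subgrids is essentially linear.

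The main obstacle is the bookkeeping rather than any single clever idea: $k$, $r$, and $b(G)$ are all $\Theta(l^{2/3})$, and the whole statement lives in their lower-order ($\Theta(l^{1/3})$ and $O(1)$) corrections coming from the floors and ceilings. So the estimate for $k=\lceil l/s\rceil^{2}$, the bound on the subgrid radius $r$, and the bound on the least integer solving the cubic each have to be tight enough that the final inequality survives, and then one must actually solve that inequality in $l^{1/3}$ to recover the exact constant $403$. A secondary point to get right is the covering argument for the non-square boundary subgrids, but as noted these are only easier than the full $s\times s$ subgrids.
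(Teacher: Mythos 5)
Your proposal is correct and follows essentially the same route as the paper: bound the algorithm's round count by roughly $2l^{2/3}+2l^{1/3}+1$ (number of subgrids plus one subgrid radius) and then invoke the volume-based lower bound of \Cref{eqn:lowerbound} to show that half this count cannot burn an $l\times l$ grid once $l\geq 403$. The only difference is cosmetic: the paper plugs half the round count directly into \Cref{eqn:lowerbound} and observes it fails, whereas you first extract the explicit bound $b(G)\gtrsim (3/2)^{1/3}l^{2/3}$ and then compare, which is the same inequality organized differently.
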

\vspace{-0.3cm}
\begin{proof}
    The algorithm divides the grid in to at most $(l^{\frac{1}{3}}+1)\times (l^{\frac{1}{3}}+1)$ subgrids.
    In each round $i$, $x_i$ is set to be the $i$-th fire source. As the fire source is placed at the radical center of a subgrid, and the radius
    of the subgrid is $l^{\frac{2}{3}}$, so it takes $l^{\frac{2}{3}}$ rounds to burn the corresponding subgrid. As the last fire source may take up to $l^{\frac{2}{3}}$  rounds to burn the respective subgrid, our algorithm takes a total of at most $\lceil{(l^{\frac{1}{3}}+1)\times (l^{\frac{1}{3}}+1)+l^{\frac{2}{3}}}\rceil$ rounds i.e., $\lceil{2 l^{\frac{2}{3}}+ 2 l^{\frac{1}{3}} +1}\rceil$ rounds to burn $G$ completely.
    Next we see what can not be burnt using half of the rounds that our algorithm takes in worst case.

    It is impossible to burn any $l\times l$ square grid with $l\geq 403$ in less than $\frac{\lceil{2 l^{\frac{2}{3}}+ 2 l^{\frac{1}{3}} +1}\rceil}{2}$ rounds as $i=\frac{\lceil {2 l^{\frac{2}{3}}+ 2 l^{\frac{1}{3}} +1}\rceil}{2}$ value does not satisfy Equation \ref{eqn:lowerbound} for $b=l$ where $l\geq 403$. Remember, to burn $G$ completely in $i$ rounds, Equation \ref{eqn:lowerbound} must be satisfied. Hence the proof.
%
\qed\end{proof}


\vspace{-0.3cm}
\section{Burning interval graphs}\label{section:burn-interval-graphs}
\vspace{-0.3cm}
We show that burning connected interval graphs is NP-Complete by giving a reduction from the distinct 3-partition problem. We construct interval graph from any given instance of the  distinct 3-partition problem. We do so by replacing each spider structure by a ``comb structure'' in the construction of the NP-Completeness proof for burning trees in \cite{Bessy2017}, which we elaborate later in this section. But before going to that, we have the following discussion that tries to relate burning an interval graph to burning a path.

Bonato et al. (2016) \cite{Bonato2016} proved that a path or a cycle of $n$ vertices can be \textit{burned} in $\lceil \sqrt{n} \rceil$ steps. Note the following observation from the above fact.
\begin{observation}\label{observation:path}
The burning clusters of each of the $n$ fire sources of any optimal burning sequence of a path of $n^2$ vertices are pairwise disjoint.
\end{observation}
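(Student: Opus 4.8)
The plan is to prove the statement by a tight double-counting argument. First I would invoke the fact quoted just above from Bonato et al.: a path $P$ on $n^2$ vertices has burning number $b(P)=\lceil\sqrt{n^2}\,\rceil=n$, so every \emph{optimal} burning sequence $S=(x_1,x_2,\dots,x_n)$ of $P$ has exactly $n$ fire sources. For each $i$ the burning cluster of $x_i$ is $C_i := P.N_{n-i}[x_i]$, and since $P$ is a path this is a set of consecutive vertices, whence $|C_i|\le 2(n-i)+1$: an internal source reaches $n-i$ vertices on each side plus itself, and being near an endpoint can only shrink the cluster (the last source gives $|C_n|\le 1 = 2\cdot 0+1$, consistent with the bound).

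Next I would sum these bounds over all sources:
\[
\sum_{i=1}^{n}|C_i|\;\le\;\sum_{i=1}^{n}\bigl(2(n-i)+1\bigr)\;=\;\sum_{j=0}^{n-1}(2j+1)\;=\;n^2 .
\]
On the other hand, because $S$ burns $P$ completely, \cref{equation:WGBC} gives $\bigcup_{i=1}^n C_i = V(P)$, so $\bigl|\bigcup_{i=1}^n C_i\bigr| = n^2$. Combining the two observations,
\[
n^2=\Bigl|\bigcup_{i=1}^n C_i\Bigr|\;\le\;\sum_{i=1}^{n}|C_i|\;\le\;n^2 ,
\]
so every inequality in this chain is forced to be an equality.

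Finally I would extract disjointness from the equality $\bigl|\bigcup_{i=1}^n C_i\bigr| = \sum_{i=1}^n |C_i|$: for finite sets this equality holds if and only if the sets are pairwise disjoint (a one-line induction, or inclusion–exclusion). Hence $C_1,\dots,C_n$ are pairwise disjoint, which is exactly the claim. I do not expect a genuine obstacle here; the only points needing a little care are the uniform per-cluster bound $|C_i|\le 2(n-i)+1$ (handling endpoint effects and the degenerate last source) and the remark that the same equality case additionally forces $|C_i|=2(n-i)+1$ for every $i$ — i.e.\ no cluster is truncated by an endpoint of the path — a byproduct that is likely worth recording explicitly, since it is the form in which this fact is used in the interval-graph reduction.
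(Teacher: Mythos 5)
Your double-counting argument is correct and is precisely the reasoning the paper leaves implicit when it states this observation as a consequence of $b(P_{n^2})=n$: each cluster has at most $2(n-i)+1$ vertices, the sizes sum to exactly $n^2$, and coverage of all $n^2$ vertices forces equality and hence pairwise disjointness. Your added remark that equality also forces $|C_i|=2(n-i)+1$ (no endpoint truncation) is a sound and useful byproduct, consistent with how the observation is used later.
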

\vspace{-0.2cm}


We provide an example of burning a path of size nine as shown in the \Cref{figure:pb}. The vertex $v_3$ is chosen as fire source $x_1$ in time $t=1$ and $x_1$ is burned at this step. In time $t=2$, $v_7$ is chosen as the next fire source $x_2$ and it is burned in this step. Along with this, the one hop neighbors $v_2$, $v_4$ of the already burnt (by step 1) vertex $v_3$ also are burned in this step. In time $t=3$, $v_9$ is selected as the third fire source and subsequently is burned in this step. Also the one hop neighbors $v_1$, $v_5$, $v_6$ and $v_8$ are burned in this step by the spread of fire from the already burnt vertices (by step 2).

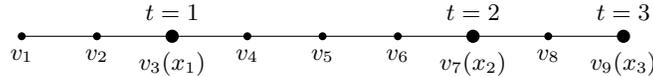
\begin{figure}[ht!]
    \centering
    \begin{tikzpicture}
        \node [circle, fill=black, inner sep=0pt, minimum size=3pt, label=below:$v_1$] (A) at (0,0) {};
        \node [circle, fill=black, inner sep=0pt, minimum size=3pt, label=below:$v_2$] (B) at (1,0) {};
        \node [circle, fill=black, inner sep=0pt, minimum size=5pt, label=below:{$v_3(x_1)$}, label=above:{$t=1$}] (C) at (2,0) {};
        \node [circle, fill=black, inner sep=0pt, minimum size=3pt, label=below:$v_4$] (D) at (3,0) {};
        \node [circle, fill=black, inner sep=0pt, minimum size=3pt, label=below:$v_5$] (E) at (4,0) {};
        \node [circle, fill=black, inner sep=0pt, minimum size=3pt, label=below:$v_6$] (F) at (5,0) {};
        \node [circle, fill=black, inner sep=0pt, minimum size=5pt, label=below:{$v_7(x_2)$}, label=above:{$t=2$}] (G) at (6,0) {};
        \node [circle, fill=black, inner sep=0pt, minimum size=3pt, label=below:$v_8$] (H) at (7,0) {};
        \node [circle, fill=black, inner sep=0pt, minimum size=5pt, label=below:{$v_9(x_3)$}, label=above:{$t=3$}] (I) at (8,0) {};

        \draw (0,0) -- (8,0);
    \end{tikzpicture}
    \caption{Path of length nine is burnt in three steps.}
    \label{figure:pb}
\end{figure}

We would like to recall another result from \cite{Kamali2020,Kare2019} on the bounds on burning  number of interval graphs as the following observation.
\begin{observation}\label{observation:bound}
Let $L$ be a maximum length path among the all pair shortest paths of an interval graph $G$. Then $b(L) \leq b(G) \leq b(L)+1$.
\end{observation}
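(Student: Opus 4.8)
The plan is to prove the two inequalities separately. The lower bound $b(L)\le b(G)$ uses only that $L$, being a shortest path of $G$, is an \emph{isometric} subpath, i.e.\ $d_L(a,b)=d_G(a,b)$ for all $a,b\in V(L)$. Fix an optimal burning sequence $(x_1,\dots,x_k)$ of $G$ with $k=b(G)$, so $\bigcup_{i=1}^{k}G.N_{k-i}[x_i]=V(G)$. For each $i$, any two vertices of $V(L)\cap G.N_{k-i}[x_i]$ are within distance $2(k-i)$ of each other in $G$, hence within distance $2(k-i)$ in $L$ by isometry; on a path, a vertex set of diameter $2(k-i)$ lies in a block of at most $2(k-i)+1$ consecutive vertices, so taking $c_i\in V(L)$ to be the middle vertex of that block gives $L.N_{k-i}[c_i]\supseteq V(L)\cap G.N_{k-i}[x_i]$. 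Intersecting the covering identity with $V(L)$ yields $\bigcup_i L.N_{k-i}[c_i]=V(L)$, so by (the sufficiency direction of) \cref{equation:WGBC} the sequence $(c_1,\dots,c_k)$ burns $L$ and $b(L)\le k$.

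For the upper bound $b(G)\le b(L)+1$, the key structural fact is that a connected interval graph has a \emph{dominating} shortest path. Taking an interval representation, let $u$ (resp.\ $v$) be a vertex whose interval has the leftmost left endpoint (resp.\ the rightmost right endpoint), and build a $u$--$v$ path $P$ greedily, always stepping to the neighbour whose interval reaches farthest right. A standard interval-graph argument shows $P$ is a shortest $u$--$v$ path; moreover, since consecutive intervals of $P$ overlap and $P$ runs from the globally leftmost to the globally rightmost endpoint, the union of the intervals along $P$ is the entire span, so every interval of $G$ meets an interval of $P$, i.e.\ $G.N_1[V(P)]=V(G)$. Being a shortest path, $P$ has at most $\operatorname{diam}(G)+1=|V(L)|$ vertices, and since a path on $m$ vertices has burning number $\lceil\sqrt m\,\rceil$, we obtain $b(P)\le b(L)$.

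It then remains to lift an optimal burning sequence of $P$ to one of $G$ at the cost of a single extra round. Let $q=b(P)$ with optimal sequence $(y_1,\dots,y_q)$ for $P$, so $\bigcup_j P.N_{q-j}[y_j]=V(P)$. In $G$, use the length-$(q+1)$ sequence $(y_1,\dots,y_q,z)$ with $z$ arbitrary: now $y_j$ has radius $(q+1)-j=(q-j)+1$, and since $P.N_{q-j}[y_j]\subseteq G.N_{q-j}[y_j]$, its cluster in $G$ contains $G.N_1\big[P.N_{q-j}[y_j]\big]$. Taking the union over $j$ and using $G.N_1[V(P)]=V(G)$ gives $\bigcup_j G.N_{(q+1)-j}[y_j]=V(G)$, so the sequence burns $G$ and $b(G)\le q+1=b(P)+1\le b(L)+1$. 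Together with the first part this yields $b(L)\le b(G)\le b(L)+1$.

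The only non-routine ingredient, and the main thing to get right, is the dominating-shortest-path lemma for interval graphs; this is a well-known property (a special case of the dominating-pair property of asteroidal-triple-free graphs), so the write-up can either cite it or include the short greedy argument sketched above, while everything else is just bookkeeping with \cref{equation:WGBC}. I would also remark why the additive $1$ cannot be removed for an arbitrary choice of $L$: a diametral shortest path need not dominate $G$ (a vertex can sit at distance $2$ from it), so the dominating path $P$ used above is in general not $L$ itself and may be strictly shorter, which is precisely the source of the possible gap between $b(L)$ and $b(G)$.
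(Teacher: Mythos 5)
Your proof is correct, but note that the paper never proves this statement at all: \Cref{observation:bound} is explicitly recalled as a known result from the cited works of Kamali et al.\ and Kare et al., so there is no in-paper argument to compare against. What you have written is a sound self-contained derivation along the same lines as those references: the lower bound exploits that $L$, being a shortest path, is isometric, and that on a path any set of diameter $2(k-i)$ sits inside a ball of radius $k-i$, so the clusters of an optimal burning of $G$ project to a covering of $L$; the upper bound uses the dominating shortest path of an interval graph (the greedy leftmost-to-rightmost path $P$ with $|V(P)|\le |V(L)|$) and pads the optimal burning of $P$ by one round so that every cluster grows by one hop and absorbs the dominated vertices. Two small points deserve a line each in a careful write-up. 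First, both directions invoke the \emph{sufficiency} of the covering condition, whereas \cref{equation:WGBC} is stated in the paper only as a necessary condition; sufficiency holds because a center that is already burnt at its scheduled round still spreads fire to its $(k-i)$-hop neighbourhood by round $k$ (alternatively cite Bonato et al.\ or Bessy et al.\ for this equivalence), and this also disposes of the fact that your centers $c_i$ need not be distinct or form a literally valid burning sequence. Second, your closing remark about why the additive $1$ may be unavoidable for an arbitrary diametral path is apt and consistent with the discussion in the cited works, which show the bound is essentially tight.
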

Also note that, finding such $L$ is easy to do in polynomial time. We can simply compute all pair shortest path and choose the maximum length path among all.
Then we can see from the proof of Observation \ref{observation:bound} that burning an interval graph in $(b(L)+1)$ rounds, i.e., at most in $(b(G)+1)$ rounds is trivial. We study whether finding a burning sequence of length $b(G)$ is possible in polynomial time, especially if $b(G)=b(L)$ for an interval graph $G$. We show that determining whether $b(G)=b(L)$ is NP-Complete.

\vspace{0.2cm}
\noindent \textbf{General idea:} First we provide a general idea behind our approach. We prove the Np-Completeness of burning interval graphs by giving a reduction from the distinct 3-partition problem. We construct interval graph from any given instance of the  distinct 3-partition problem. We show that burning this interval graph is possible optimally in polynomial time if and only if one can solve the  distinct 3-partition problem. While describing the idea, we refer to few notations here which are defined in Section \ref{sec:IGconstruct}.

We start with any input $X$ of the  distinct 3-partition problem.
First we construct another set $X'$ from $X$ such that all the elements of $X'$ are odd. The reason behind moving to $X'$ is, we aim to use the fact that the sizes of the burning clusters of the fire sources on a path are all odds if the length of the path is a perfect square. First  we construct a path $P_I$ of length $(2m+1)^2$ (where $m = \max (X)$) by combining few subpaths of shorter lengths. Note that $b(P_I)=2m+1$. Then we add few vertices and corresponding edges to some of the subpaths $T_j$ of $P_I$ in such a way that it remains an interval graph. We call it $IG(X)$ (Section \ref{sec:IGconstruct}). The optimal burning number $b(IG(X))$ takes the value $2m+1$ whenever $X'$(and eventually $X$) can be partitioned according to the  distinct 3-partition problem (Lemma \ref{lemma:BNb(G)IG}). So we keep the burning number of the path $P_I$ and the interval graph $IG(X)$ same.

Additional vertices and edges are added to the sub paths $T_j$ to form structures $T^c_j$ (refer Figure \ref{figure:TStructureIG}) in such a way that to burn $IG(X)$ optimally, one must have to burn each $T^c_j$ only with one fire source (Lemma \ref{lemma:notopt}). Not only that, one must have to put that fire source on $T^c_j$ in a particular round depending on the length of the subpath $T_j$ (Lemma \ref{lemma:AFSOnr_iIG}).
With the help of these results and another couple of results, we finally show that, to burn this interval graph optimally in $b(IG(X))$ steps, one needs to solve the  distinct 3-partition problem on the input $X$. This makes our problem an NP-Complete problem (Section \ref{sec:IGoptseq} Theorem \ref{theorem:BIGNPCIG}).

\vspace{-0.3cm}
\subsection{Interval graph construction}\label{sec:IGconstruct}

Let n be a natural number. Let $X=\{a_1, a_2, \cdots, a_{3n}\}$ be an input to a  distinct 3-partition problem. So, $n=\frac{|X|}{3}$ and $B = \frac{s(X)}{n}$.
Let $m = \max (X)$, and $k=m-3n$. Let $F_m$ be the set of first $m$ natural numbers, $F_m = \{1,2,3,...,m\}$. Also let $F^{\prime}_m$ be the set of first $m$ odd numbers, $F^\prime_m = \{2\ f_i-1: f_i \in F_m\} = \{1, 3, 5, . . ., 2m-1\}$. Let $X^\prime = \{2\ a_i-1:a_i \in X\}$, $B^\prime = \frac{s(X^\prime)}{n}$. Observe that $s(X^\prime) = \sum_{i=1}^{3n} 2\ a_i -1 = 2nB-3n$, so $B^\prime = 2B-3$. It is easy to observe that any solution of $X$ gives a solution of $X'$ and vice versa. Let $Y=F_m^\prime\setminus X^\prime$.

Let there be $n$ paths $Q_1,Q_2...,Q_n$, each of order $B^\prime$. Consider $k$ paths $Q_1^\prime$, $Q_2^\prime$, $...$, $Q_k^\prime$ such that each $Q_j^\prime$ $(\forall\ 1\leq j\leq k)$ is of order of $j^{th}$ largest number in $Y$, where $k=|Y|$. Clearly the total number of vertices in $Q_1,Q_2...,Q_n,Q_1^\prime,Q_2^\prime,...,Q_k^\prime$ is $m^2$, i.e., equals $s(F^\prime_m)$.
Consider another $m+1$ paths $T_1,T_2,...,T_{m+1}$ such that each $T_j$ $(\forall\ 1\leq j\leq m+1)$ is of order of $2(2m+1-j)+1$. Total number of vertices in
$T_1,T_2,...,T_{m+1}$ is $\sum_{j=1}^{m+1} \left(2(2m+1-j)+1\right)= (3m^2 +4m +1)$.
We join these paths in the following order to form a larger path:\\
$Q_1$, $T_1$, $Q_2$, $T_2$, $...$, $Q_n$, $T_n$, $Q_1^\prime,T_{n+1}$, $Q_2^\prime$, $T_{n+2}$, $...$, $Q_k^\prime$, $T_{n+k}$, $T_{n+k+1}$, $...$, $T_{m+1}$.
We denote this path as $P_I$. The total number of vertices in $P_I$ is $m^2 + 3m^2 +4m +1 = (2m+1)^2$. Hence $b(P_I)=(2m+1)$.

Now we add few more vertices to $P_I$ in such a way that it remains an interval graph and the optimal burning number of the graph remains same as $b(P_I)$. We add a distinct vertex connected to each vertex from $2nd$ to $2nd$-last vertices of $T_j$, $\forall\ 1\leq j\leq m+1$ (\Cref{figure:TStructureIG} illustrates an example $T_j$ along with the added vertices and edges (vertically upwards w.r.t. $T_j$). This forms a kind of comb structure; we call it $T^c_j$).
Let this graph be called $IG(X)$. Now we calculate total number of vertices in $IG(X)$. Number of vertices added to each $T_j$ is $\left(2(2m+1-j)+1\right)-2 = \left(4m+1-j\right)$. Hence total number of vertices added to $P_I$ is $\sum_{j=1}^{m+1}\left(4m+1-j\right) = (3m^2+2m-1)$. So, total number of vertices in $IG(X)$ is $(2m+1)^2+ (3m^2+2m-1) = (7m^2+6m)$. One such example of $IG(X)$ is shown in Figure \ref{figure:IGNPCEIG} corresponding to the numerical example given in \Cref{subsection:ExConsIG}.

Observe that $P_I$ is a diameter of $IG(X)$ and there is no cycle in $IG(X)$. Also, all the vertices which are not in $P_I$ are connected to some vertex of $P_I$ by an edge. Hence $IG(X)$ is a valid interval graph.

\newcounter{c}
\newcounter{d}
\newcounter{i}
\begin{figure}[h]
    \centering
    \begin{tikzpicture}[scale=.7]
        \setcounter{c}{0}
        \setcounter{d}{-1}
        \setcounter{i}{1}
        \loop
            \ifthenelse{\value{c}=5}{
                \node [circle, fill=black, inner sep=0pt, minimum size=3pt, label=below:{$x_1$}]      (A\thec) at (\value{c}*.3,0) {};
            }{
                \ifthenelse{\value{c}=11}{
                   \node [circle, fill=black, inner sep=0pt, minimum size=3pt, label=below:{$v\ v_r$}] (A\thec) at (\value{c}*.3,0) {};
                }{
                    \ifthenelse{\value{c}=20}{
                        \node [circle, fill=black, inner sep=0pt, minimum size=3pt, label=below:{$x_2$}] (A\thec) at (\value{c}*.3,0) {};
                    }{
                        \node [circle, fill=black, inner sep=0pt, minimum size=3pt] (A\thec) at (\value{c}*.3,0) {};
                    }
                }
            }

            \ifthenelse{\value{c}>0 \AND \value{c}<32}{
                \ifthenelse{\value{c}=1}{
                    \node [circle, fill=black, inner sep=0pt, minimum size=3pt, label=left:{$u_j^1$}] (B\thec) at (\value{c}*.3,1) {};

                    \draw[blue] (A\thec) -- (B\thec);
                }{
                    \ifthenelse{\value{c}=31}{
                        \node [circle, fill=black, inner sep=0pt, minimum size=3pt, label=right:{$u_j^{|T_j|-2}$}] (B\thec) at (\value{c}*.3,1) {};

                        \draw[blue] (A\thec) -- (B\thec);
                    }{
                        \ifthenelse{\value{c}=10}{
                            \node [circle, fill=black, inner sep=0pt, minimum size=3pt, label=above:{$v_t$}] (B\thec) at (\value{c}*.3,1) {};

                            \draw[blue] (A\thec) -- (B\thec);
                        }{
                            \node [circle, fill=black, inner sep=0pt, minimum size=3pt] (B\thec) at (\value{c}*.3,1) {};

                            \draw[blue] (A\thec) -- (B\thec);
                        }
                    }
                }
            }{}

            \ifthenelse{\value{c}>0}{
                \draw (A\thec) -- (A\thed);
            }{}

            \ifthenelse{\value{c}=0}{
                \draw[red,dashed] (-1,0) -- (A\thec);
            }{}
            \ifthenelse{\value{c}=32}{
                \draw[red,dashed] (A\thec) -- (\value{c}*.3+1,0);
            }{}

            \stepcounter{c}
            \stepcounter{d}
            \stepcounter{i}
            \ifnum \value{c}<33
            \repeat

            \draw[<-] (1.9,-.3) -- (2.6,-.3);
            \draw[->] (3.7,-.3) -- (5.6,-.3);
    \end{tikzpicture}
    \vspace{-0.4cm}
    \caption{Structure of a $T_j$ with 33 vertices, along with the extra vertices connected to it. The dashed line represents the fact that other subpaths may be connected to a $T_j$ on either or both ends.}
    \label{figure:TStructureIG}
\end{figure}
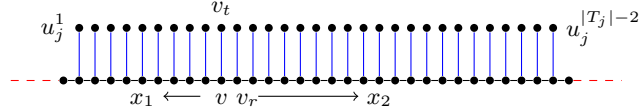

Let $u^1_j$ be the vertex connected to the $2nd$ vertex of each $T_j$ and $u^{|T_j|-2}_j$ be the vertex connected to its $2nd$-last vertex of $T_j$, where $|T_j|$ stands for the number of vertices in the subpath $T_j$. Let $A^{T_j}=\{u^1_j, u^2_j, ..., u^{|T_j|-2}_j\}$ be the set of all $|T_j|-2$ additional vertices corresponding to $T_j$. Now we mention an important observation regarding burning $T^c_j$.

\subsection{Example construction}\label{subsection:ExConsIG}

In this section, we show the construction of $IG(X)$ from a particular input set $X$.
Let $X=\{10,11,12,14,15,16\}$. Then $n=2,\ m = 16,\ B = 39,$ and $k=10$. Also $F_m=\{1,2,...,16\}$ and $F_m^\prime=\{1,3,...,31\}$. Further, $X^\prime = \{19,21,23,27,29,31\}$, $B^\prime = 75=2B-3$ and $Y=\{1,3,5,7,9,11,13,15,17,25\}$.
Observe that $Q_1$ and $Q_2$ are paths of size $75$, and each $Q_1^\prime, Q_2^\prime,...,Q_{k}^\prime$ are paths of order of $25$, $17$, $15$, $13$, $11$, $9$, $7$, $5$, $3$, $1$ respectively. $T_1,T_2,T_3,...T_{m+1}$ are of order of $65,63, 61...,33$ respectively.
We add a vertex connected to each vertex from $2nd$ to $2nd$-last vertices of $T_j (1\leq j\leq m+1)$.
Observe that this is a valid interval graph. The constructed example $IG(X)$ is shown in \Cref{figure:IGNPCEIG}.

\newcounter{n}
\newcounter{r}
\begin{figure}
    \centering
    \begin{tikzpicture}[scale=.6]
        \setcounter{c}{0}
        \setcounter{r}{0}
        \loop
            \node [circle, fill=black, inner sep=0pt, minimum size=3pt] (A) at (\value{r}*3.75,-\value{c}*2) {};
            \ifnum \value{r}>0
                \draw[red] (B) -- (A);
            \fi
            \node [circle, fill=black, inner sep=0pt, minimum size=3pt] (B) at (\value{r}*3.75+2.75,-\value{c}*2) {};

            \draw (A)--(B);

            \ifthenelse{\isodd{\value{r}}}{
                \node [circle, fill=black, inner sep=0pt, minimum size=3pt] (P1) at (\value{r}*3.75+.2,-\value{c}*2) {};
                \node [circle, fill=black, inner sep=0pt, minimum size=3pt] (P2) at (\value{r}*3.75+2.75-.2,-\value{c}*2) {};

                \node [circle, fill=black, inner sep=0pt, minimum size=3pt] (U1) at (\value{r}*3.75+.2,-\value{c}*2+.75) {};
                \node [circle, fill=black, inner sep=0pt, minimum size=3pt] (U2) at (\value{r}*3.75+2.75-.2,-\value{c}*2+.75) {};

                \draw[blue] (P1) -- (U1);
                \draw[blue] (P2) -- (U2);

                \node [circle, fill=black, inner sep=0pt, minimum size=3pt] (P3) at (\value{r}*3.75+.4,-\value{c}*2) {};
                \node [circle, fill=black, inner sep=0pt, minimum size=3pt] (P4) at (\value{r}*3.75+2.75-.4,-\value{c}*2) {};

                \node [circle, fill=black, inner sep=0pt, minimum size=3pt] (U3) at (\value{r}*3.75+.4,-\value{c}*2+.75) {};
                \node [circle, fill=black, inner sep=0pt, minimum size=3pt] (U4) at (\value{r}*3.75+2.75-.4,-\value{c}*2+.75) {};

                \draw[blue] (P1) -- (U1);
                \draw[blue] (P2) -- (U2);

                \node [circle, fill=black, inner sep=0pt, minimum size=3pt] (P5) at (\value{r}*3.75+.6,-\value{c}*2) {};
                \node [circle, fill=black, inner sep=0pt, minimum size=3pt] (P6) at (\value{r}*3.75+2.75-.6,-\value{c}*2) {};

                \node [circle, fill=black, inner sep=0pt, minimum size=3pt] (U5) at (\value{r}*3.75+.6,-\value{c}*2+.75) {};
                \node [circle, fill=black, inner sep=0pt, minimum size=3pt] (U6) at (\value{r}*3.75+2.75-.6,-\value{c}*2+.75) {};

                \draw[blue] (P3) -- (U3);
                \draw[blue] (P4) -- (U4);
                \draw[blue] (P5) -- (U5);
                \draw[blue] (P6) -- (U6);

                \node [circle, fill=black, inner sep=0pt, minimum size=2pt] at (\value{r}*3.75+2.75/2,-\value{c}*2+.375) {};
                \node [circle, fill=black, inner sep=0pt, minimum size=2pt] at (\value{r}*3.75+2.75/2-.2,-\value{c}*2+.375) {};
                \node [circle, fill=black, inner sep=0pt, minimum size=2pt] at (\value{r}*3.75+2.75/2+.2,-\value{c}*2+.375) {};
            }{}

            \stepcounter{r}
            \ifnum\value{r}<4
            \repeat

        \node at (1.375,-.5) {$Q_1$}; \node at (5.125,-.5) {$T_1$};
        \node at (8.625,-.5) {$Q_2$}; \node at (13,-.5) {$T_2$};

        \stepcounter{c}
        \setcounter{r}{5}
        \loop
            \node [circle, fill=black, inner sep=0pt, minimum size=3pt] (A\ther) at (\value{r}*2.5,-\value{c}*2) {};
            \node [circle, fill=black, inner sep=0pt, minimum size=3pt] (B\ther) at (\value{r}*2.5+1.5,-\value{c}*2) {};
            \ifnum\value{r}=5
                \draw[red] (B\ther) -- (B);
            \fi
            \ifnum\value{r}<5
                \setcounter{n}{\value{r}}
                \stepcounter{n}
                \draw[red] (B\ther) -- (A\then);
            \fi

            \draw (A\ther)--(B\ther);

            \ifthenelse{\isodd{\value{r}}}{}{
                \node [circle, fill=black, inner sep=0pt, minimum size=3pt] (P1) at (\value{r}*2.5+.2,-\value{c}*2) {};
                \node [circle, fill=black, inner sep=0pt, minimum size=3pt] (P2) at (\value{r}*2.5+1.5-.2,-\value{c}*2) {};

                \node [circle, fill=black, inner sep=0pt, minimum size=3pt] (U1) at (\value{r}*2.5+.2,-\value{c}*2+.75) {};
                \node [circle, fill=black, inner sep=0pt, minimum size=3pt] (U2) at (\value{r}*2.5+1.5-.2,-\value{c}*2+.75) {};

                \draw[blue] (P1) -- (U1);
                \draw[blue] (P2) -- (U2);

                \node [circle, fill=black, inner sep=0pt, minimum size=3pt] (P3) at (\value{r}*2.5+.4,-\value{c}*2) {};
                \node [circle, fill=black, inner sep=0pt, minimum size=3pt] (P4) at (\value{r}*2.5+1.5-.4,-\value{c}*2) {};

                \node [circle, fill=black, inner sep=0pt, minimum size=3pt] (U3) at (\value{r}*2.5+.4,-\value{c}*2+.75) {};
                \node [circle, fill=black, inner sep=0pt, minimum size=3pt] (U4) at (\value{r}*2.5+1.5-.4,-\value{c}*2+.75) {};

                \draw[blue] (P3) -- (U3);
                \draw[blue] (P4) -- (U4);

                \node [circle, fill=black, inner sep=0pt, minimum size=3pt] (P3) at (\value{r}*2.5+.4,-\value{c}*2) {};
                \node [circle, fill=black, inner sep=0pt, minimum size=3pt] (P4) at (\value{r}*2.5+1.5-.4,-\value{c}*2) {};

                \node [circle, fill=black, inner sep=0pt, minimum size=3pt] (U3) at (\value{r}*2.5+.4,-\value{c}*2+.75) {};
                \node [circle, fill=black, inner sep=0pt, minimum size=3pt] (U4) at (\value{r}*2.5+1.5-.4,-\value{c}*2+.75) {};

                \draw[blue] (P3) -- (U3);
                \draw[blue] (P4) -- (U4);

                \node [circle, fill=black, inner sep=0pt, minimum size=2pt] (U4) at (\value{r}*2.5+1.5/2,-\value{c}*2+.375) {};
                \node [circle, fill=black, inner sep=0pt, minimum size=2pt] (U4) at (\value{r}*2.5+1.5/2+.15,-\value{c}*2+.375) {};
                \node [circle, fill=black, inner sep=0pt, minimum size=2pt] (U4) at (\value{r}*2.5+1.5/2-.15,-\value{c}*2+.375) {};
            }

            \addtocounter{r}{-1}
            \ifnum\value{r}>-1
            \repeat

        \node at (.75,-2.5) {$T_5$}; \node at (3.25,-2.5) {$Q_3^\prime$};
        \node at (5.75,-2.5) {$T_4$}; \node at (8.25,-2.5) {$Q_2^\prime$};
        \node at (10.75,-2.5) {$T_3$}; \node at (13.25,-2.5) {$Q_1^\prime$};

        \stepcounter{c}
        \setcounter{r}{0}
        \loop
            \node [circle, fill=black, inner sep=0pt, minimum size=3pt] (A) at (\value{r}*2.5,-\value{c}*2) {};
            \ifnum\value{r}=0
                \draw[red] (A) -- (A0);
            \fi
            \ifnum \value{r}>0
                \draw[red] (B) -- (A);
            \fi
            \node [circle, fill=black, inner sep=0pt, minimum size=3pt] (B) at (\value{r}*2.5+1.5,-\value{c}*2) {};

            \draw (A)--(B);
            \ifthenelse{\isodd{\value{r}}}{
                \node [circle, fill=black, inner sep=0pt, minimum size=3pt] (P1) at (\value{r}*2.5+.2,-\value{c}*2) {};
                \node [circle, fill=black, inner sep=0pt, minimum size=3pt] (P2) at (\value{r}*2.5+1.5-.2,-\value{c}*2) {};

                \node [circle, fill=black, inner sep=0pt, minimum size=3pt] (U1) at (\value{r}*2.5+.2,-\value{c}*2+.75) {};
                \node [circle, fill=black, inner sep=0pt, minimum size=3pt] (U2) at (\value{r}*2.5+1.5-.2,-\value{c}*2+.75) {};

                \draw[blue] (P1) -- (U1);
                \draw[blue] (P2) -- (U2);

                \node [circle, fill=black, inner sep=0pt, minimum size=3pt] (P3) at (\value{r}*2.5+.4,-\value{c}*2) {};
                \node [circle, fill=black, inner sep=0pt, minimum size=3pt] (P4) at (\value{r}*2.5+1.5-.4,-\value{c}*2) {};

                \node [circle, fill=black, inner sep=0pt, minimum size=3pt] (U3) at (\value{r}*2.5+.4,-\value{c}*2+.75) {};
                \node [circle, fill=black, inner sep=0pt, minimum size=3pt] (U4) at (\value{r}*2.5+1.5-.4,-\value{c}*2+.75) {};

                \draw[blue] (P3) -- (U3);
                \draw[blue] (P4) -- (U4);

                \node [circle, fill=black, inner sep=0pt, minimum size=2pt] (U4) at (\value{r}*2.5+1.5/2,-\value{c}*2+.375) {};
                \node [circle, fill=black, inner sep=0pt, minimum size=2pt] (U4) at (\value{r}*2.5+1.5/2+.15,-\value{c}*2+.375) {};
                \node [circle, fill=black, inner sep=0pt, minimum size=2pt] (U4) at (\value{r}*2.5+1.5/2-.15,-\value{c}*2+.375) {};
            }{}

            \stepcounter{r}
            \ifnum\value{r}<6
            \repeat

        \node at (.75,-4.5) {$Q_4^\prime$}; \node at (3.25,-4.5) {$T_6$};
        \node at (5.75,-4.5) {$Q_5^\prime$}; \node at (8.25,-4.5) {$T_7$};
        \node at (10.75,-4.5) {$Q_6^\prime$}; \node at (13.25,-4.5) {$T_8$};

        \stepcounter{c}
        \setcounter{r}{5}
        \loop
            \node [circle, fill=black, inner sep=0pt, minimum size=3pt] (A\ther) at (\value{r}*2.5,-\value{c}*2) {};
            \node [circle, fill=black, inner sep=0pt, minimum size=3pt] (B\ther) at (\value{r}*2.5+1.5,-\value{c}*2) {};
            \ifnum\value{r}=5
                \draw[red] (B\ther) -- (B);
            \fi
            \ifnum\value{r}<5
                \setcounter{n}{\value{r}}
                \stepcounter{n}
                \draw[red] (B\ther) -- (A\then);
            \fi

            \draw (A\ther)--(B\ther);

            \ifthenelse{\isodd{\value{r}}}{}{
                \node [circle, fill=black, inner sep=0pt, minimum size=3pt] (P1) at (\value{r}*2.5+.2,-\value{c}*2) {};
                \node [circle, fill=black, inner sep=0pt, minimum size=3pt] (P2) at (\value{r}*2.5+1.5-.2,-\value{c}*2) {};

                \node [circle, fill=black, inner sep=0pt, minimum size=3pt] (U1) at (\value{r}*2.5+.2,-\value{c}*2+.75) {};
                \node [circle, fill=black, inner sep=0pt, minimum size=3pt] (U2) at (\value{r}*2.5+1.5-.2,-\value{c}*2+.75) {};
                \draw[blue] (P1) -- (U1);
                \draw[blue] (P2) -- (U2);\node [circle, fill=black, inner sep=0pt, minimum size=3pt] (P3) at (\value{r}*2.5+.4,-\value{c}*2) {};
                \node [circle, fill=black, inner sep=0pt, minimum size=3pt] (P4) at (\value{r}*2.5+1.5-.4,-\value{c}*2) {};

                \node [circle, fill=black, inner sep=0pt, minimum size=3pt] (U3) at (\value{r}*2.5+.4,-\value{c}*2+.75) {};
                \node [circle, fill=black, inner sep=0pt, minimum size=3pt] (U4) at (\value{r}*2.5+1.5-.4,-\value{c}*2+.75) {};

                \draw[blue] (P3) -- (U3);
                \draw[blue] (P4) -- (U4);

                \node [circle, fill=black, inner sep=0pt, minimum size=2pt] (U4) at (\value{r}*2.5+1.5/2,-\value{c}*2+.375) {};
                \node [circle, fill=black, inner sep=0pt, minimum size=2pt] (U4) at (\value{r}*2.5+1.5/2+.15,-\value{c}*2+.375) {};
                \node [circle, fill=black, inner sep=0pt, minimum size=2pt] (U4) at (\value{r}*2.5+1.5/2-.15,-\value{c}*2+.375) {};
            }

            \addtocounter{r}{-1}
            \ifnum\value{r}>-1
            \repeat

        \node at (.75,-6.5) {$T_{11}$}; \node at (3.25,-6.5) {$Q_9^\prime$};
        \node at (5.75,-6.5) {$T_{10}$}; \node at (8.25,-6.5) {$Q_8^\prime$};
        \node at (10.75,-6.5) {$T_9$}; \node at (13.25,-6.5) {$Q_7^\prime$};

        \stepcounter{c}
        \setcounter{r}{0}
        \loop

            \ifnum\value{r}=0
                \node [circle, fill=black, inner sep=0pt, minimum size=3pt] (a) at (\value{r}*2.5,-\value{c}*2) {};
                \draw[red] (a) -- (A0);
            \fi
            \ifnum\value{r}>0
                \node [circle, fill=black, inner sep=0pt, minimum size=3pt] (A) at (\value{r}*2.5,-\value{c}*2) {};
                \ifnum \value{r}=1
                    \draw[red] (A) -- (a);
                \fi
                \ifnum \value{r}>1
                    \draw[red] (B) -- (A);
                \fi
                \node [circle, fill=black, inner sep=0pt, minimum size=3pt] (B) at (\value{r}*2.5+1.5,-\value{c}*2) {};

                \draw (A)--(B);

                \node [circle, fill=black, inner sep=0pt, minimum size=3pt] (P1) at (\value{r}*2.5+.2,-\value{c}*2) {};
                \node [circle, fill=black, inner sep=0pt, minimum size=3pt] (P2) at (\value{r}*2.5+1.5-.2,-\value{c}*2) {};

                \node [circle, fill=black, inner sep=0pt, minimum size=3pt] (U1) at (\value{r}*2.5+.2,-\value{c}*2+.75) {};
                \node [circle, fill=black, inner sep=0pt, minimum size=3pt] (U2) at (\value{r}*2.5+1.5-.2,-\value{c}*2+.75) {};\node [circle, fill=black, inner sep=0pt, minimum size=3pt] (P3) at (\value{r}*2.5+.4,-\value{c}*2) {};
                \node [circle, fill=black, inner sep=0pt, minimum size=3pt] (P4) at (\value{r}*2.5+1.5-.4,-\value{c}*2) {};

                \node [circle, fill=black, inner sep=0pt, minimum size=3pt] (U3) at (\value{r}*2.5+.4,-\value{c}*2+.75) {};
                \node [circle, fill=black, inner sep=0pt, minimum size=3pt] (U4) at (\value{r}*2.5+1.5-.4,-\value{c}*2+.75) {};

                \draw[blue] (P3) -- (U3);
                \draw[blue] (P4) -- (U4);

                \node [circle, fill=black, inner sep=0pt, minimum size=2pt] (U4) at (\value{r}*2.5+1.5/2,-\value{c}*2+.375) {};
                \node [circle, fill=black, inner sep=0pt, minimum size=2pt] (U4) at (\value{r}*2.5+1.5/2+.15,-\value{c}*2+.375) {};
                \node [circle, fill=black, inner sep=0pt, minimum size=2pt] (U4) at (\value{r}*2.5+1.5/2-.15,-\value{c}*2+.375) {};
            \fi

            \draw[blue] (P1) -- (U1);
            \draw[blue] (P2) -- (U2);

            \stepcounter{r}
            \ifnum\value{r}<6
            \repeat

        \node at (.2,-8.5) {$Q_{10}^\prime$}; \node at (3.25,-8.5) {$T_{12}$};
        \node at (5.75,-8.5) {$T_{13}$}; \node at (8.25,-8.5) {$T_{14}$};
        \node at (10.75,-8.5) {$T_{15}$}; \node at (13.25,-8.5) {$T_{16}$};

        \stepcounter{c}
        \setcounter{r}{5}
        \loop
            \node [circle, fill=black, inner sep=0pt, minimum size=3pt] (A\ther) at (\value{r}*2.5,-\value{c}*2) {};
            \node [circle, fill=black, inner sep=0pt, minimum size=3pt] (B\ther) at (\value{r}*2.5+1.5,-\value{c}*2) {};
            \ifnum\value{r}=5
                \draw[red] (B\ther) -- (B);
            \fi
            \ifnum\value{r}<5
                \setcounter{n}{\value{r}}
                \stepcounter{n}
                \draw[red] (B\ther) -- (A\then);
            \fi

            \draw (A\ther)--(B\ther);

            \node [circle, fill=black, inner sep=0pt, minimum size=3pt] (P1) at (\value{r}*2.5+.2,-\value{c}*2) {};
            \node [circle, fill=black, inner sep=0pt, minimum size=3pt] (P2) at (\value{r}*2.5+1.5-.2,-\value{c}*2) {};

            \node [circle, fill=black, inner sep=0pt, minimum size=3pt] (U1) at (\value{r}*2.5+.2,-\value{c}*2+.75) {};
            \node [circle, fill=black, inner sep=0pt, minimum size=3pt] (U2) at (\value{r}*2.5+1.5-.2,-\value{c}*2+.75) {};

            \draw[blue] (P1) -- (U1);
            \draw[blue] (P2) -- (U2);

            \node [circle, fill=black, inner sep=0pt, minimum size=3pt] (P3) at (\value{r}*2.5+.4,-\value{c}*2) {};
            \node [circle, fill=black, inner sep=0pt, minimum size=3pt] (P4) at (\value{r}*2.5+1.5-.4,-\value{c}*2) {};

            \node [circle, fill=black, inner sep=0pt, minimum size=3pt] (U3) at (\value{r}*2.5+.4,-\value{c}*2+.75) {};
            \node [circle, fill=black, inner sep=0pt, minimum size=3pt] (U4) at (\value{r}*2.5+1.5-.4,-\value{c}*2+.75) {};

            \draw[blue] (P3) -- (U3);
            \draw[blue] (P4) -- (U4);

            \node [circle, fill=black, inner sep=0pt, minimum size=2pt] (U4) at (\value{r}*2.5+1.5/2,-\value{c}*2+.375) {};
            \node [circle, fill=black, inner sep=0pt, minimum size=2pt] (U4) at (\value{r}*2.5+1.5/2+.15,-\value{c}*2+.375) {};
            \node [circle, fill=black, inner sep=0pt, minimum size=2pt] (U4) at (\value{r}*2.5+1.5/2-.15,-\value{c}*2+.375) {};

            \addtocounter{r}{-1}
            \ifnum\value{r}>4
            \repeat

        \node at (13.25,-10.5) {$T_{17}$};
    \end{tikzpicture}
    \vspace{-0.5cm}
    \caption{Construction of an example $IG(X)$.}
    \label{figure:IGNPCEIG}
\end{figure}
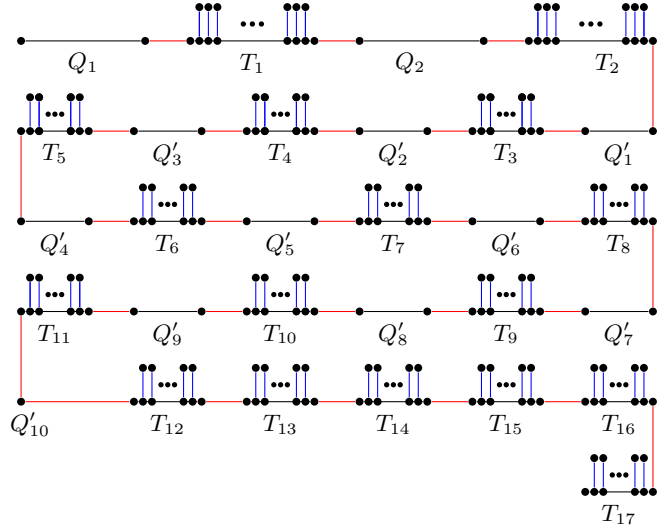

\begin{observation}\label{observation:overlap}
If $T^c_j$ is burnt by putting $m\geq 2$ fire sources on $T_j$, then the burning clusters of at least two of these fire sources overlap (i.e., contain common vertices) of $T_j$.
\end{observation}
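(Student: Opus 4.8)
The plan is a proof by contradiction that examines where the $m$ fire sources sit on $T_j$. Write $T_j$ as the path $v_1v_2\cdots v_L$ with $L=|T_j|=2(2m+1-j)+1\ge 2m+1\ge 5$, and let $u_i$ be the pendant attached to $v_i$ for $2\le i\le L-1$, so $T^c_j$ is $T_j$ together with these pendants. The first step records a local fact about burning clusters: if a source $x$ on $T_j$ still has $r$ rounds to spread, then its cluster meets $T_j$ in an interval of consecutive path vertices (the radius-$r$ ball about $x$, clipped to $\{v_1,\dots,v_L\}$); since $u_i$ sits at distance $d(x,v_i)+1$ from $x$, it burns $u_i$ iff $v_i$ is within distance $r-1$ of $x$. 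Two consequences are what matter: if $x$ burns $u_i$ then it also burns $v_i$, and if $v_i$ is an endpoint of that interval reached at distance exactly $r$ whose outward path-neighbour $v_{i\pm1}$ exists but lies outside the ball (a genuine boundary, not a clip at $v_1$ or $v_L$), then $x$ does not burn $u_i$.

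Now assume for contradiction that the $m$ clusters, intersected with $T_j$, are pairwise disjoint. Since the $m$ sources burn all of $T^c_j$ they in particular cover $T_j$, so their traces on $T_j$ partition $\{v_1,\dots,v_L\}$ into nonempty intervals $J_1<J_2<\cdots<J_m$ listed left to right. As $m\ge2$ there is an internal boundary: let $v_c$ be the right endpoint of $J_1$, so $1\le c\le L-1$ (as $J_1\neq\{v_1,\dots,v_L\}$). Take $v_i:=v_c$ if $c\ge2$, and $v_i:=v_2$, the left endpoint of $J_2$, if $c=1$; in either case $2\le i\le L-1$ (using $L\ge5$ in the second case), so $v_i$ carries a pendant $u_i$, and $v_i$ is an endpoint of its part $J_t$ whose outward path-neighbour is a bona fide path vertex outside $J_t$. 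By the first step, the source $x_t$ whose cluster-trace is $J_t$ reaches $v_i$ at distance exactly its remaining-round count, hence does not burn $u_i$.

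It remains to see that no other source burns $u_i$. Any $x_s$ with $s\ne t$ that burned $u_i$ would also burn $v_i$; but $v_i\in J_t$ and the traces are pairwise disjoint, so $v_i\notin J_s$ and $x_s$ does not even burn $v_i$. Thus $u_i$ is burned by nobody, contradicting that $T^c_j$ is completely burnt; therefore two of the $m$ clusters share a vertex of $T_j$. I expect the only delicate point to be the second step's bookkeeping at the two ends $v_1,v_L$ of $T_j$ — guaranteeing that the chosen boundary vertex $v_i$ really has a pendant and that the relevant ball stops at $v_i$ for a genuine reason rather than because of clipping at a path end — which is exactly where the hypotheses $m\ge2$ and $L\ge5$ are needed.
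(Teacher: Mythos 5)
Your proof is correct and follows essentially the same route as the paper's: assume the clusters' traces on $T_j$ are pairwise disjoint, locate a boundary vertex of a cluster whose outward path-neighbour lies outside it, and show the pendant (comb) vertex there is reached by no source, a contradiction. Your explicit handling of the endpoint case ($c=1$) just makes precise what the paper does via its pigeonhole choice between $v$ and $v_r$.
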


\begin{proof}
Let that some $T^c_j$ be completely burnt by two or more fire sources and yet there is no overlap between the burning clusters of any of those fire sources.
Since all the fire sources are on $T_j$, which is a sub path of $P_I$, we say two fire sources on $T_j$ are \textit{adjacent} if there is a path in $T_j$ between those two fire sources such that the path does not contain any other fire sources.
For any two adjacent fire sources let us assume that there is no vertex which lies in the burning clusters of both the fire sources. Let $v$ be a vertex on the path joining those two adjacent fire sources $x_1$ and $x_2$, such that the vertices in the left side of $v$ including it (vertices towards $x_1$ as shown in Figure \ref{figure:TStructureIG} using the left arrow) are burnt by $x_1$ and the vertices in the right of $v$ (excluding $v$) are burnt by $x_2$.

Let the vertex that is just right to $v$ is $v_r$. By pigeonhole principle, we have that at least one of $v$ or $v_r$ having a neighbor $v_t$ in $T^c_j$ which is not in $T_j$. Without the loss of generality, let that $v$ is having such a neighbor. Since the burning cluster of $x_1$ extends till $v$ and not to its one hop neighbor $v_r\ (\in T_j)$, so it does not burn the other one hop neighbor $v_t\ (\not\in T_j)$ too. It is easy to see that the second fire source can not burn  $v_t$. This is contradiction to our assumption that $T^c_j$ is burnt completely without overlapping clusters.
\qed\end{proof}

The following observation is immediate.

\begin{observation}\label{observation:Tone}
If a single fire source is able to burn $T_j$ in $t$ rounds, then $T^c_j$ would also be burnt by it in the same number of rounds.
\end{observation}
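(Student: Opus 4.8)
Recall that $T^c_j$ is obtained from the path $T_j$ by attaching a single pendant vertex $u^i_j$ to each of the \emph{internal} vertices of $T_j$, i.e.\ to the vertices in positions $2,3,\dots,|T_j|-1$ of the subpath; the two endpoints of $T_j$ receive no pendant, and these $u^i_j$ are the only vertices of $T^c_j$ lying outside $T_j$. Suppose $x$ is a single fire source that burns $T_j$ in $t$ rounds. Since $x$ is the only source it is placed in round $1$, so this says precisely that every vertex of $T_j$ is at distance at most $t-1$ from $x$; this is unambiguous because attaching pendants creates no shortcuts, so $\operatorname{dist}$ inside $T_j$ coincides with $\operatorname{dist}$ in $T^c_j$. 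The plan is to show that, with the same placement, every vertex of $T^c_j$ is within distance $t-1$ of $x$. As $T_j$ is already covered, only the pendants $u^i_j$ need checking, and for each of them $\operatorname{dist}(x,u^i_j)\le \operatorname{dist}(x,v_i)+1$, where $v_i$ is the internal vertex carrying $u^i_j$.

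First I would treat the case $x\in V(T_j)$: say $x$ occupies position $p$ of the path $T_j$ on $\ell:=|T_j|$ vertices, with endpoints at positions $1$ and $\ell$. For an internal vertex $v_i$ with $2\le i\le \ell-1$ we have $\operatorname{dist}(x,v_i)=|p-i|\le \max(p-2,\ \ell-1-p)=\max(p-1,\ \ell-p)-1$. The two endpoints of $T_j$ are at distances $p-1$ and $\ell-p$ from $x$ and both lie within distance $t-1$, so $\max(p-1,\ell-p)\le t-1$; hence $\operatorname{dist}(x,v_i)\le t-2$ and therefore $\operatorname{dist}(x,u^i_j)\le t-1$. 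Thus every pendant is burnt within $t$ rounds as well, which proves the claim in this case. (Here $t\ge 2$ automatically, since $|T_j|\ge 3$ cannot be burnt from a single source in one round.)

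It remains to dispose of the possibility $x\notin V(T_j)$, i.e.\ $x=u^i_j$ for some $i$. The unique neighbour of $u^i_j$ in $T^c_j$ is $v_i\in V(T_j)$, so the fire from $x$ reaches $V(T_j)$ only through $v_i$; hence $x$ burning $T_j$ in $t$ rounds forces $v_i$ to burn $T_j$ in $t-1$ rounds. Applying the previous paragraph to the source $v_i\in V(T_j)$ (again $t-1\ge 2$ necessarily) shows $v_i$ burns $T^c_j$ in $t-1$ rounds, and therefore $x$ burns $T^c_j$ in $t$ rounds.

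I do not expect a genuine obstacle here. The whole argument rests on the single structural fact that pendants hang only on \emph{internal} vertices of $T_j$, so every internal vertex sits at distance at least one less than the farther endpoint of $T_j$ from any source placed on $T_j$ — exactly the slack needed to absorb the extra hop out to its pendant. This is why the statement can be recorded as immediate; the only care required is the position bookkeeping above and the trivial reduction handling a source that happens to be placed on a pendant rather than on $T_j$ itself.
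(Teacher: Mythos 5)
The paper offers no proof at all here -- it records the observation as ``immediate'' -- so there is nothing to match line by line; your write-up supplies exactly the slack argument that the authors evidently had in mind (pendants hang only on internal vertices of $T_j$, and every internal vertex is at least one step closer to the source than the farther endpoint, which is the extra hop needed for its pendant), and the position bookkeeping $|p-i|\le\max(p-2,\ell-1-p)=\max(p-1,\ell-p)-1\le t-2$ is correct. The one point to tighten is your case split: after treating $x\in V(T_j)$ you write ``$x\notin V(T_j)$, i.e.\ $x=u^i_j$,'' which silently assumes the fire source lies on $T^c_j$. The observation as stated (and a source permitted anywhere on $P_I$, cf.\ Lemma~\ref{lemma:EFSOnPath}) also allows $x$ outside $T^c_j$, e.g.\ on an adjacent $Q_i$, and such a source can well burn $T_j$ when $t$ is large. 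That case is not a real obstacle -- since $IG(X)$ is a tree and $T_j$ is a contiguous subpath of $P_I$, fire from outside $T^c_j$ enters $T_j$ through exactly one endpoint, so the distance to the $i$-th vertex of $T_j$ is $d+(i-1)$, the farthest vertex of $T_j$ is the opposite endpoint, and every internal vertex again has the one unit of slack needed for its pendant -- but the sentence covering it should be added, since as written your dichotomy is not exhaustive. (For the paper's actual use of the observation, where the source is the centre $r_j$ of $T_j$, your first case already suffices.)
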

\vspace{-0.3cm}
\begin{lemma}\label{lemma:notopt}
If at least one $T_j$ is burnt using more than one fire sources, then $P_I$ can not be burnt optimally, i.e., in $b(P_I)(= 2m+1)$ steps.
\end{lemma}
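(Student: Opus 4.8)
I read the statement in the natural way: if a burning sequence of $IG(X)$ places two or more fire sources on the vertices of a single $T_j$, then it cannot have length $b(P_I)=2m+1$ (equivalently, it cannot be optimal, since $P_I$ is the diameter of the tree $IG(X)$, so $b(IG(X))\ge b(P_I)=2m+1$, and a length-$(2m+1)$ burning of $IG(X)$ burns all of $P_I$). So let $S=(x_1,\dots,x_{2m+1})$ be a hypothetical length-$(2m+1)$ burning sequence, write $r_i=2m+1-i$ for the number of rounds over which $x_i$ still spreads, and suppose at least two of the $x_i$ lie on some $T_j$. The first step is to push everything onto $P_I$. Since $IG(X)$ is a tree whose unique longest path is $P_I$ and every off-backbone vertex is a pendant tooth, the set $C_i$ of $P_I$-vertices inside the cluster of $x_i$ is a subpath of $P_I$ of length at most $2r_i+1$, the inequality being strict (length $\le 2r_i-1$) when $x_i$ is itself a tooth. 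As $\sum_i(2r_i+1)=(2m+1)^2=|V(P_I)|$ and the $C_i$ must cover $P_I$, equality is forced everywhere: every $x_i$ lies on $P_I$, each $|C_i|=2r_i+1$, and the $C_i$ \emph{partition} $P_I$ into intervals of the $2m+1$ distinct odd lengths $1,3,\dots,4m+1$. This is exactly the counting behind \Cref{observation:path}.

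\textbf{Bringing in the teeth.} Now I would use the comb gadget. If a teeth-bearing $P_I$-vertex $w$ sits at an \emph{endpoint} of its interval $C_i$, then $w$ is at distance exactly $r_i$ from $x_i$, so the tooth hanging from $w$ is at distance $r_i+1$ and is not burnt by $x_i$; and since the $C_i$ partition $P_I$, it is burnt by no other source either (this is the phenomenon isolated in \Cref{observation:overlap}). Hence for $S$ to burn $IG(X)$, no interval $C_i$ may have a teeth-bearing vertex at an endpoint. Fix $T_j$, of order $\ell_j=2(2m+1-j)+1$: its teeth-bearing vertices form a run of $\ell_j-2$ consecutive $P_I$-vertices, flanked by the two non-teeth-bearing end vertices of $T_j$. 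Any interval $C_i$ meeting this run must, by the previous sentence, have both endpoints outside the run, hence — being an interval — contains the entire run; by the partition property exactly one interval $C^{(j)}$ does this, and it in fact contains all of $T_j$, so $|C^{(j)}|\ge \ell_j$.

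\textbf{Locating the source and concluding.} Since $|C^{(j)}|$ is one of the odd lengths in $\{\ell_j,\ell_j+2,\dots,4m+1\}$, it exceeds $\ell_j$ by at most $4m+1-\ell_j=2j-2$, and these at most $2j-2$ surplus vertices of $C^{(j)}$ lie outside $T_j$, distributed between its two sides. Because $\ell_j-1=4m+2-2j\ge 2m\ge 2j-2$ (using $j\le m+1$), the midpoint of $C^{(j)}$ — which is precisely the fire source that generates $C^{(j)}$ — still lies on $T_j$. But any fire source lying on $T_j$ belongs to the unique interval of the partition containing it, namely $C^{(j)}$, and therefore \emph{is} this midpoint; so $T_j$ carries exactly one fire source, contradicting the assumption of two or more. (As a bonus, running this count simultaneously over $C^{(1)},\dots,C^{(m+1)}$ against the $m+1$ largest available lengths $4m+1,4m-1,\dots,2m+1$ forces $C^{(j)}=T_j$ and pins its source to round $j$ — the content that \Cref{lemma:AFSOnr_iIG} will need.)

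\textbf{Where the difficulty lies.} The only real idea is the teeth argument: a pendant tooth sitting over a cluster's boundary vertex is out of reach of every source, so the comb's interior must be engulfed by a single cluster. This is precisely what the comb $T^c_j$ is designed to guarantee, and once it is established the rest is interval-partition bookkeeping. A smaller point one must not skip is the claim that $C_i\cap P_I$ is always an interval of length at most $2r_i+1$ — i.e.\ that an off-backbone source is strictly wasteful — which relies on $IG(X)$ being a tree in which $P_I$ is isometric and all other vertices are pendant.
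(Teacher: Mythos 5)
Your proof is correct and rests on the same two pillars as the paper's own argument: the counting on $P_I$ showing that a $(2m+1)$-round burning forces the clusters, restricted to $P_I$, to be pairwise disjoint intervals of the odd sizes $1,3,\dots,4m+1$ (the paper's \Cref{observation:path}), and the comb-gadget fact that a pendant tooth hanging over a cluster-boundary vertex is burnt by nobody (the paper's \Cref{observation:overlap}). The difference is completeness rather than route: the paper's proof of \Cref{lemma:notopt} is a two-line citation of those observations, where \Cref{observation:overlap} silently presupposes that the sources burning $T^c_j$ sit on $T_j$ itself, and the fact that no source sits on a tooth is only established later (\Cref{lemma:EFSOnPath}). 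You derive these facts up front from the tightness of the count ($\sum_i (2r_i+1)=(2m+1)^2$ forces every source onto $P_I$ and every cluster to full size, hence a partition), then use the tooth argument to show a single interval must engulf all of $T_j$, and finally place that interval's midpoint inside $T_j$ (the estimate $\ell_j-1=4m+2-2j\ge 2m\ge 2j-2$ checks out) to cap the number of sources on $T_j$ at one. So you prove a self-contained and slightly stronger statement — a single cluster covers each $T^c_j$, which is exactly what \Cref{lemma:AFSOnr_iIG} later needs — at the cost of re-deriving material the paper distributes over \Cref{observation:path}, \Cref{observation:overlap}, and \Cref{lemma:EFSOnPath}. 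I see no gaps in your argument.
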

\vspace{-0.3cm}
\begin{proof}
Since $P_I$ is a simple path of length $(2m+1)^2$, according to \Cref{observation:path}, each fire source in an optimal burning sequence must burn disjoint set of vertices of $P_I$. Let ${x_1, x_2,...,x_{2m+1}}$ be an optimal burning sequence of $P_I$ such that some $T_j$ is burnt using more than one fire sources, then according to \Cref{observation:overlap}, at least two fire sources burn at least one common vertex of $P_I$ and hence ${x_1, x_2,...,x_{2m+1}}$ can not be an optimal burning sequence for $P_I$.
\qed\end{proof}

\subsection{NP-Completeness}
\label{sec:IGoptseq}
\begin{lemma}\label{lemma:BNb(G)IG}
If $X^\prime$ has a solution for the  distinct 3-partition problem, then burning number of $IG(X)$ is $2m+1$.
\end{lemma}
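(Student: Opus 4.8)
The plan is to establish matching bounds. The lower bound is essentially free: $P_I$ is a diameter of $IG(X)$, so it is the longest shortest path $L$ appearing in \Cref{observation:bound}; since $P_I$ has $(2m+1)^2$ vertices we have $b(P_I)=2m+1$, and \Cref{observation:bound} yields $b(IG(X))\ge b(P_I)=2m+1$. All the real work is therefore in the upper bound: given a distinct $3$-partition solution of $X'$, I will exhibit an explicit burning sequence $(x_1,\dots,x_{2m+1})$ of $IG(X)$ and check that its clusters cover $V(IG(X))$, i.e. that \eqref{equation:WGBC} holds.

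For the construction, I first spend the rounds $1,\dots,m+1$ on the comb structures: for each $j\in\{1,\dots,m+1\}$ let $x_j$ be the center vertex of the subpath $T_j$, placed in round $j$. Since $|T_j|=2(2m+1-j)+1$, the center of $T_j$ lies at distance $2m+1-j$ from each endpoint of $T_j$, and a source placed in round $j$ gets exactly $2m+1-j$ spreading rounds; hence $x_j$ burns all of $T_j$ by round $2m+1$ and nothing outside it (its cluster has exactly $|T_j|$ vertices, so it does not reach into the neighbouring $Q$- or $Q'$-subpaths). By \Cref{observation:Tone} the same source burns all of $T^c_j$. Thus these $m+1$ sources burn $\bigcup_{j=1}^{m+1}V(T^c_j)$ and nothing else.

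The remaining $m$ sources occupy rounds $m+2,\dots,2m+1$, and the source placed in round $m+1+t$ ($1\le t\le m$) can burn a cluster of at most $2(m-t)+1$ vertices along a path; so the available cluster capacities are precisely the $m$ odd numbers $2m-1,2m-3,\dots,1$, that is, the set $F'_m$. What remains to be burnt is exactly the $Q$- and $Q'$-subpaths, with $s(F'_m)=m^2$ vertices in total. The orders of $Q'_1,\dots,Q'_k$ are precisely the elements of $Y=F'_m\setminus X'$. By hypothesis $X'$ has a distinct $3$-partition into $n$ triples each summing to $B'=2B-3$; for each $i$ I subdivide the order-$B'$ path $Q_i$ into three consecutive sub-intervals whose lengths are one such triple. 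Now the multiset of all ``segments'' (the $Q'$-paths together with these $3n$ sub-intervals) has length-multiset $Y\sqcup X'=F'_m$, which matches the multiset of remaining cluster capacities exactly; moreover every segment has odd length. I therefore assign to each segment of length $\ell$ the unique remaining round whose capacity is $\ell$, and place that fire source at the center of the segment, so that its cluster equals that segment precisely (again with no spill-over, since capacity equals segment length). Every vertex of every $Q$- and $Q'$-subpath is then burnt, and together with the comb sources the sequence $(x_1,\dots,x_{2m+1})$ satisfies \eqref{equation:WGBC}; consistently with \Cref{observation:path} the clusters in fact form a partition of $V(P_I)$. Hence $b(IG(X))\le 2m+1$, and with the lower bound $b(IG(X))=2m+1$.

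The main obstacle, and the point the write-up must make carefully, is the bookkeeping: that after the $m+1$ comb sources are spent in rounds $1,\dots,m+1$ the leftover cluster capacities are exactly $F'_m$, and that $F'_m$ decomposes as (the orders of the $Q'$-paths) $\sqcup\,X'$, so that the distinct $3$-partition of $X'$ is precisely what lets the order-$B'$ paths $Q_i$ absorb the $X'$-sized clusters three at a time. This is exactly what the construction in \Cref{sec:IGconstruct} was engineered to arrange; everything else (centered placement of a source in an interval of matching odd length, absence of spill-over, and the fact that placing the chosen vertices in the stated rounds is a legitimate burning sequence) is routine.
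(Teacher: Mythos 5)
Your proposal is correct and is essentially the paper's own argument: the paper also partitions each $Q_i$ into three subpaths according to the $3$-partition of $X'$, and then burns $IG(X)$ by placing the $i$-th fire source at the $(2m-i+2)$-th vertex (i.e.\ the center) of the $i$-th largest subpath among the $T_j$'s, $Q'_j$'s and these pieces, which is exactly your center-placement with capacities matched to lengths, combined with the same lower bound from the subpath $P_I$ of order $(2m+1)^2$. You merely spell out the capacity/length bookkeeping and the no-spill-over check more explicitly than the paper does.
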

\vspace{-0.3cm}
\begin{proof}
If $X^\prime$ has a solution for the  distinct 3-partition problem, there would be $n$ sets of three numbers each, sum of which is $B'$. Recall that length of each $Q_i$ is $B'$. Hence, $Q_1,...,Q_n$ can be partitioned into further subpaths $Q^{\prime\prime}_1,...,Q^{\prime\prime}_{3n}$. Let us call the partitions of $Q_i$ as $Q^{\prime\prime}_{3(i-1)+1}$, $Q^{\prime\prime}_{3(i-1)+2}$, and $Q^{\prime\prime}_{3i}$. Since $X'$ is a set of odd numbers, length of each of these $3n$ subpaths are odd.

Let $P^\prime = \{Q^{\prime\prime}_1$, $...$, $Q^{\prime\prime}_{3n}$, $Q^\prime_1$, $...$, $Q^\prime_k$, $T_1$, $...$, $T_{m+1}\}$.
Let $r_i$ be the $((2m+1)-i+1)^{th} = (2m-i+2)^{th}$ vertex on the $i^{th}$ largest subpath in $P^\prime$. Then, the burning sequence $S^\prime = (r_1,r_2,..,r_{2m+1})$ can burn $P_I$ and subsequently $IG(X)$. This implies that $b(IG(X))\leq 2m+1$. Since $IG(X)$ has a subpath of length $(2m+1)^2$ in form of $P_I$, we have $b(IG(X))\geq 2m+1$. Hence, $b(IG(X)) = 2m+1$.
\qed\end{proof}
\begin{lemma}\label{lemma:EFSOnPath}
Each fire source $y_i$ of any optimal burning sequence $(y_1,y_2,...,y_{2m+1})$ of $IG(X)$ must be on $P_I$.
\end{lemma}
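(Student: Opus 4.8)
The plan is to prove this by a tight double‑counting argument on how much of the backbone $P_I$ a single fire source can possibly burn. First I would recall the structure of $IG(X)$ established in the construction: it has no cycles, $P_I$ is a diameter, and every vertex outside $P_I$ is a pendant attached directly to a vertex of $P_I$ (one of the comb teeth on some $T_j$). Consequently $IG(X)$ is a tree, the graph distance between two vertices of $P_I$ equals their distance along $P_I$, and for a pendant $u$ attached to $v \in P_I$ one has $d(u,w) = 1 + d(v,w)$ for every $w \in P_I$.

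Next, fix the given optimal burning sequence $(y_1,\dots,y_k)$ with $k = 2m+1$ and, for each $i$, let $C_i = IG(X).N_{k-i}[y_i] \cap V(P_I)$ be the portion of the $i$-th burning cluster lying on $P_I$. Since the sequence burns all of $IG(X)$, in particular $\bigcup_{i=1}^{k} C_i = V(P_I)$, and $|V(P_I)| = (2m+1)^2 = k^2$. I would then bound $|C_i|$: if $y_i \in P_I$, then $C_i$ is a contiguous subpath of $P_I$ with at most $2(k-i)+1$ vertices; if $y_i \notin P_I$, say $y_i$ is a pendant at $v \in P_I$, then $C_i = N_{k-i-1}[v] \cap V(P_I)$ is a contiguous subpath with at most $2(k-i)-1$ vertices when $k-i \geq 1$ and empty when $k-i = 0$, so in all cases $|C_i| \leq 2(k-i)$. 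Hence $|C_i| \leq 2(k-i)+1$ always, with strict inequality whenever $y_i \notin P_I$.

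Finally, using the identity $\sum_{i=1}^{k}\bigl(2(k-i)+1\bigr) = k^2$, I would conclude $\bigl|\bigcup_i C_i\bigr| \leq \sum_i |C_i| \leq k^2 = |V(P_I)|$, with equality possible only if $|C_i| = 2(k-i)+1$ for every $i$ (which forces $y_i \in P_I$) and the $C_i$ are pairwise disjoint. If some $y_i$ lay off $P_I$, the sum would drop to at most $k^2 - 1 < |V(P_I)|$, so the $C_i$ could not cover $V(P_I)$, contradicting that the sequence burns $IG(X)$. Therefore every $y_i$ is on $P_I$.

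The main point requiring care is the off‑$P_I$ bound: one must verify that diverting a fire source onto a pendant vertex strictly decreases its reach into $P_I$ — including the degenerate last round $k-i=0$ — which follows cleanly from the tree structure, since any shortest path from a pendant to $P_I$ must first traverse its unique neighbor on $P_I$. Everything else is the exact arithmetic match $\sum_{i=1}^{2m+1}\bigl(2(k-i)+1\bigr) = (2m+1)^2 = |V(P_I)|$, which is precisely the feature of the construction that leaves no slack and makes an off‑backbone fire source impossible in any length‑$(2m+1)$ burning sequence.
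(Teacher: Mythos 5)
Your proof is correct and takes essentially the same route as the paper's: both bound the intersection of each burning cluster with $P_I$ by $2(2m+1-i)+1$, observe that a fire source off $P_I$ achieves strictly less, and use the exact identity $\sum_{i=1}^{2m+1}\bigl(2(2m+1-i)+1\bigr)=(2m+1)^2=|V(P_I)|$ to conclude there is no slack. Your write-up is in fact more careful than the paper's, since it explicitly justifies the strict decrease via the pendant/tree structure and handles the degenerate round $k-i=0$.
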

\vspace{-0.3cm}
\begin{proof}
We prove it by contradiction.
If for any $i$, $y_i$ is on $P_I$, then subgraph induced by  $G.N_{2m+1-i}[y_i]\cap P_I$ has length at most $2(2m+1-i)+1$.
Let we put a fire source $y_i$ on some vertex of $A^{T_j}$ for some $j$, which is not on $P_I$, and still burn $IG(X)$ in $2m+1$ steps. Then subgraph induced by $G.N_{2m+1-i}[y_i]\cap P_I$ is a path of length less than $2(2m+1-i)+1$. This along with \Cref{equation:WGBC} implies that  $|\cup_{i=1}^{2m+1}G.N_{2m+1-i}[y_i]\cap P_I|<(2m+1)^2$. So, even $P_I$ is not burnt. This is a contradiction to our assumption that $IG(X)$ is burnt in $2m+1$ steps. Therefore each $y_i$ must be a put on some vertex in $P_I$.
\qed\end{proof}

Let $S^\prime=(y_1,y_2,...,y_{2m+1})$ be any optimal \textit{burning sequence}. Let $r_i$ be the $(2m-i+2)^{th}$ vertex on the $i^{th}$ largest sub path in $P^\prime$ as described in the proof of Lemma \ref{lemma:BNb(G)IG}. Observe that $T_j$'s are the largest $m+1$ sub paths in $P^\prime$.
\begin{lemma}\label{lemma:AFSOnr_iIG}
We must have $y_i=r_i$, $\forall\ 1\leq i\leq m+1$.
\end{lemma}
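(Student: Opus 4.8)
The plan is to induct on $i$ from $1$ to $m+1$, showing at each step that $y_i$ is forced to be the centre of the subpath $T_i$. First I would fix the global picture on $P_I$. By \Cref{lemma:EFSOnPath} every fire source $y_i$ lies on $P_I$, and since each comb vertex is a leaf attached to $T_j$, the path $P_I$ is isometric in $IG(X)$ (it is a diameter); hence $G.N_{2m+1-i}[y_i]\cap P_I$ is a contiguous segment of $P_I$ with at most $2(2m+1-i)+1$ vertices. Writing $\hat C_i := G.N_{2m+1-i}[y_i]\cap P_I$, \Cref{observation:path} gives that the $\hat C_i$ are pairwise disjoint, and \cref{equation:WGBC} forces $\bigcup_i \hat C_i = V(P_I)$. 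Since $\sum_{i=1}^{2m+1}\bigl(2(2m+1-i)+1\bigr) = (2m+1)^2 = |V(P_I)|$, every inequality $|\hat C_i|\le 2(2m+1-i)+1$ must be an equality, so each $\hat C_i$ is a full segment of length exactly $2(2m+1-i)+1$ with $y_i$ at its centre.

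Next I would record the length bookkeeping that singles out the $T_j$'s. We have $|T_j| = 2(2m+1-j)+1$, strictly decreasing in $j$, so $|T_1| > |T_2| > \dots > |T_{m+1}| = 2m+1$, whereas every other subpath of $P^\prime$ has length in $X^\prime \cup Y = F^\prime_m$, hence at most $2m-1$. Therefore $T_i$ is precisely the $i$-th largest subpath of $P^\prime$ for $1\le i\le m+1$, and by definition $r_i$ is its centre vertex (the $(2m-i+2)$-th vertex of a path on $2(2m+1-i)+1$ vertices).

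Then comes the induction itself. At index $i$ (with $1\le i\le m+1$), assume $\hat C_\ell = T_\ell$ for all $\ell < i$. Because the $\hat C_i$ partition $P_I$ into consecutive blocks and $T_i$ is itself a consecutive block of $P_I$, if $T_i$ met two distinct clusters it would be burnt using more than one fire source, contradicting \Cref{lemma:notopt}; hence $T_i \subseteq \hat C_j$ for a single index $j$. Since $\hat C_\ell = T_\ell$ is disjoint from $T_i$ for every $\ell < i$, we get $j \ge i$; and $|\hat C_j| \ge |T_i|$ reads $2(2m+1-j)+1 \ge 2(2m+1-i)+1$, i.e. $j \le i$. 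Thus $j = i$, and as $|\hat C_i| = |T_i|$ the two nested contiguous segments coincide, so $y_i$ is the centre of $T_i$, namely $r_i$; \Cref{observation:Tone} confirms this placement burns the whole comb $T^c_i$ within the same number of rounds.

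The main obstacle will be the first sentence of the inductive step: extracting, from \Cref{lemma:notopt}, the precise statement I use — that $T_i$ sits entirely inside one cluster $\hat C_j$ — rather than the weaker ``no $T_j$ is burnt by two fire sources on $T_j$''. This is handled by the partition-into-consecutive-blocks observation above, which turns ``$T_i$ straddles a cluster boundary'' into ``$T_i$ is burnt by more than one fire source''. Everything else reduces to the sum-of-odd-numbers identity already used in \Cref{lemma:BNb(G)IG} together with the elementary fact that two nested contiguous segments of equal cardinality are equal.
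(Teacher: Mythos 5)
Your argument is correct and reaches the same conclusion by a noticeably different mechanism than the paper. The paper also inducts on $i$ and also leans on \Cref{lemma:notopt} to forbid multiple fire sources per $T^c_j$, but its forcing step is local to the comb: it observes that the only vertex within distance $2m+1-i$ of \emph{both} extreme teeth $u^1_i$ and $u^{|T_i|-2}_i$ is $r_i$ (the two teeth are $2(2m+1-i)$ apart, so a vertex covering both within radius $2m+1-i$ must sit at the exact midpoint, and only the fire source of round $i$ still has that much radius once $r_1,\dots,r_{i-1}$ are spoken for). You instead argue globally: the restrictions $\hat C_i$ of the clusters to $P_I$ form an exact partition of $P_I$ into full balls of sizes $2(2m+1-i)+1$ (by the sum-of-odd-numbers count), \Cref{lemma:notopt} forces each $T_i$ into a single block, and the two-sided index comparison $j\ge i$ (disjointness from the already-placed $T_\ell$) and $j\le i$ (cardinality) pins $\hat C_i=T_i$ with $y_i$ at its centre. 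Your route has the advantage of making the size bookkeeping explicit and of deriving ``$y_i$ is the centre of its cluster'' once and for all rather than re-deriving reachability of the teeth at each step; the paper's route uses the comb teeth directly and so does not need the nested-segments-of-equal-size observation. Both versions inherit the same mild imprecision from \Cref{lemma:notopt}/\Cref{observation:overlap} (whose proof is phrased for fire sources placed \emph{on} $T_j$, whereas what is really needed is that no cluster boundary falls inside $T_j$); since you invoke \Cref{lemma:notopt} exactly as stated, just as the paper does, this is not a gap you introduced.
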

\vspace{-0.3cm}
\begin{proof}
We are going to prove this lemma using the strong induction hypothesis. We have that each $u^k_j\in A^{T_j}$ for some $j$ must receive fire from some $y_i$ in $P_I$, as all fire sources must be on $P_I$ (Lemma \ref{lemma:EFSOnPath}).
For $i=1$, the only vertex connected to both $u^1_1$ and $u^{|T_1|-2}_1$ and within a distance $2m+1-i=2m$, is $r_1$.
Now we must have $y_1=r_1$, else, if we put $y_1$ somewhere else, then neither $y_1$ nor any other fire source can burn $T^c_1$ alone.
Also, we can not use multiple fire sources to burn $T^c_1$ as an optimal burning of $IG(X)$ does not allow that (Lemma \ref{lemma:notopt}).
So, we must have that $y_1=r_1$. Now to establish strong induction, let that we need to have $y_k$ on $r_k$ for $1\leq k\le m$.
Since $r_k$ is already used to burn $T^c_k$, the only fire source that can burn $T^c_{k+1}$ alone, is $r_{k+1}$ . Recall that $T_{k+1}$ has the largest length among the remaining subpaths after $T_1, T_2, \cdots, T_k$ are burnt. And we can not use multiple fire sources to burn $T^c_{k+1}$ (Lemma \ref{lemma:notopt}).
Also the only vertex connected to both $u^1_{k+1}$ and $u^{|T_{k+1}|-2}_{k+1}$  within distance $2m+1-(k+1)$ is $r_{k+1}$.
So, we must have that $y_{k+1}=r_{k+1}$. This completes the proof.
\qed\end{proof}

Let $P^{\prime\prime} = IG(X)\setminus(T^c_1\cup T^c_2\cup ... \cup T^c_{m+1})$. Now we present the following lemma on burning this remaining subgraph $P^{\prime\prime}$. That is $P^{\prime\prime}$ is a path forest consists of the subpaths $Q_1,Q_2...,Q_n,Q_1^\prime,Q_2^\prime,...,Q_k^\prime$. Now we present the following lemma on burning $P^{\prime\prime}$.
\begin{lemma}\label{lemma:PartPFx1TO31}
There is a partition of $P^{\prime\prime}$, induced by the fire sources $y_i$ $(m+1\leq i\leq 2m+1)$ of the optimal burning sequence $S^\prime$, into paths of orders in $F_m^\prime$.
\end{lemma}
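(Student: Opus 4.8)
The plan is to transfer the disjointness phenomenon of optimal path-burning (\Cref{observation:path}) from $P_I$ down to the path forest $P''$. I start from what the earlier results already force about the optimal sequence $S'=(y_1,\dots,y_{2m+1})$: every $y_i$ lies on $P_I$ (\Cref{lemma:EFSOnPath}), and $y_i=r_i$ for $1\le i\le m+1$ (\Cref{lemma:AFSOnr_iIG}). Since $IG(X)$ is a tree with $P_I$ a diameter, $P_I$ is an isometric subpath of $IG(X)$, so, exactly as in the proof of \Cref{lemma:EFSOnPath}, the set $C_i:=G.N_{2m+1-i}[y_i]\cap P_I$ is a subpath of $P_I$ on at most $2(2m+1-i)+1$ vertices.

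The first real step is a tightness argument. Because $S'$ burns $IG(X)$ and hence $P_I$, \cref{equation:WGBC} gives $\bigcup_{i=1}^{2m+1}C_i=V(P_I)$, while $\sum_{i=1}^{2m+1}(2(2m+1-i)+1)=(2m+1)^2=|V(P_I)|$. Comparing, every one of these inequalities must be an equality: the $C_i$ are pairwise disjoint, they partition $V(P_I)$, and $|C_i|=2(2m+1-i)+1$ for each $i$. For $i\le m+1$ the fire source $y_i=r_i$ is the centre of the $i$-th longest subpath of $P'$, namely $T_i$, which has exactly $2(2m+1-i)+1$ vertices; hence the only subpath of $P_I$ of that order centred at $r_i$ is $T_i$ itself, so $C_i=V(T_i)$ for $1\le i\le m+1$, and in particular $\bigcup_{i=1}^{m+1}C_i=\bigcup_{j=1}^{m+1}V(T_j)$.

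Now I assemble the partition. Since the $C_i$ partition $V(P_I)$ and $V(P'')=V(P_I)\setminus\bigcup_j V(T_j)$, the nonempty sets among $\{C_i\cap V(P'') : m+1\le i\le 2m+1\}$ --- these are $C_{m+2},\dots,C_{2m+1}$, since $C_{m+1}=V(T_{m+1})$ contributes nothing --- are pairwise disjoint and cover $P''$. Each such $C_i$ is a connected subpath of $P_I$ that meets no $T_j$; but the connected components of $P''$ are precisely the maximal subpaths of $P_I$ lying between consecutive $T_j$'s, namely the $Q_i$'s and the $Q'_j$'s, so each $C_i$ lies inside a single component and is a genuine subpath of $P''$. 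Therefore $C_{m+2},\dots,C_{2m+1}$ form the claimed partition of $P''$, with blocks of orders $2(2m+1-i)+1$ for $i=m+2,\dots,2m+1$, that is $2m-1,2m-3,\dots,3,1$, which is exactly $F_m^\prime$ (consistently, $s(F_m^\prime)=m^2=|V(P'')|$).

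I expect the only step needing genuine care is the claim that a cluster $C_i$ disjoint from all $T_j$'s cannot straddle a gap and therefore lies inside a single $Q$- or $Q'$-component; this is immediate once one notes that, along $P_I$, the $T_j$'s are exactly the separators between those components. The tightness argument is routine but should be spelled out, as it is what upgrades the containment and the ``at most'' bound to exact, pairwise-disjoint cluster sizes, and it tacitly relies on $P_I$ being isometric in the tree $IG(X)$.
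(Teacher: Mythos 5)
Your proof is correct and rests on the same key idea as the paper's: a tightness/counting argument in which the maximal possible cluster sizes sum exactly to the number of vertices to be covered, forcing the clusters to be pairwise disjoint, of exact orders $2(2m+1-i)+1$, and hence to induce the claimed partition into paths of orders in $F_m^\prime$. The only difference is presentational — you run the count globally over $P_I$ (using $(2m+1)^2$) and then restrict to $P^{\prime\prime}$, and you explicitly check that each remaining cluster sits inside a single $Q$- or $Q^\prime$-component, whereas the paper counts directly on $P^{\prime\prime}$ with the last $m$ fire sources and asserts these points; your version is a slightly more careful rendering of the same argument.
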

\vspace{-0.3cm}
\begin{proof}
From Lemma \ref{lemma:AFSOnr_iIG}, we have that $\forall\ 1\leq i\leq m+1$, all the vertices in $T^c_i$, would be burnt by $y_i$. Therefore, we have to burn the vertices in $Q_1,...,Q_n,Q_1^\prime,...,Q_k^\prime$ by the fire sources $\{y_{m+2}$,$y_{m+3}$,$\cdots$, $y_{2m+1}\}$ (the remaining $m$ sources of fire). Since $P^{\prime\prime}$ is a disjoint union of paths, so we have that $\forall i$ such that $\ m+2\leq i\leq 2m+1$, the subgraph induced by the vertices in $G.N_{2m+1-i}[y_i]$ is a path of length at most $2(2m+1-i)+1$. Moreover, we have that the path forest $P^{\prime\prime}$ is of order $\sum_{i=1}^{m}(2i-1)=m^2$. This implies that for each $i$ with $\ m+2\leq i\leq 2m+1$, the subgraph induced by the vertices in $G.N_{2m+1-i}[y_i]$ is a path of order equal to $2(2m+1-i)+1$, otherwise we cannot burn all the vertices of $P^{\prime\prime}$ by these $m$ fire sources which is a contradiction to the fact that $S^\prime$ is a optimal burning sequence. Therefore there must be a partition of $P^{\prime\prime}$, induced by the burning sequence $y_{m+2},y_{m+3},...,y_{2m+1}$, into subpaths of length as per each element in $F_m^\prime = \{1, 3, 5, . . ., 2m-1\}$.
\qed\end{proof}
\vspace{-0.3cm}
\begin{theorem}\label{theorem:BIGNPCIG}
Optimal burning of an interval graph is NP-Complete.
\end{theorem}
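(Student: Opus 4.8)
The plan is to establish the two components of NP-completeness. Membership in NP is immediate: a burning sequence of length at most $k$ is a polynomial-size certificate, and whether it satisfies \Cref{equation:WGBC} can be verified in polynomial time by computing the relevant neighbourhoods. The substance is the hardness reduction, which maps an instance $X$ of the distinct 3-partition problem to the pair $(IG(X),\,2m+1)$, where $m = \max(X)$ and $IG(X)$ is the connected interval graph built in \Cref{sec:IGconstruct}. I would first observe that this is a genuine polynomial-time many-one reduction: the distinct 3-partition problem is NP-complete in the strong sense, so $m$ is polynomially bounded in the input length, $IG(X)$ has only $7m^2+6m$ vertices, and it is a (connected) interval graph as already established in the construction. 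It then remains to prove that $IG(X)$ can be burnt in $2m+1$ rounds if and only if $X^\prime$ (equivalently $X$) admits a distinct 3-partition.

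The ``if'' direction is precisely \Cref{lemma:BNb(G)IG}. For the converse, note that $b(IG(X)) \ge b(P_I) = 2m+1$ always, since $P_I$ is a subpath (indeed a diameter) of $IG(X)$ on $(2m+1)^2$ vertices; hence being burnable in at most $2m+1$ rounds is the same as $b(IG(X)) = 2m+1$, and we may fix an optimal burning sequence $S^\prime = (y_1,\dots,y_{2m+1})$. By \Cref{lemma:EFSOnPath} each $y_i$ lies on $P_I$; by \Cref{lemma:AFSOnr_iIG} we have $y_i = r_i$ for all $1 \le i \le m+1$, so the first $m+1$ fire sources burn exactly $T^c_1,\dots,T^c_{m+1}$ and nothing more; and by \Cref{lemma:PartPFx1TO31} the remaining $m$ fire sources $y_{m+2},\dots,y_{2m+1}$ partition the path forest $P^{\prime\prime}$, whose components are $Q_1,\dots,Q_n,Q_1^\prime,\dots,Q_k^\prime$, into subpaths whose orders are exactly the elements of $F_m^\prime = \{1,3,\dots,2m-1\}$, each occurring once.

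It remains to turn this partition into a $3$-partition. Each component $Q_i$ has order $B^\prime = 2B-3 \ge 4m-1$, using $B > 2m$, which follows from $a_i < B/2$ and $\max(X)=m$. Since the two largest orders available sum to $(2m-1)+(2m-3)=4m-4 < B^\prime$, each $Q_i$ must be cut into at least three subpaths, while each $Q_j^\prime$ trivially receives at least one subpath. But the total number of subpaths equals $|F_m^\prime| = m = 3n + k$, with $n$ components $Q_i$ and $k=|Y|$ components $Q_j^\prime$, so these lower bounds are forced to be tight: every $Q_i$ is split into exactly three subpaths and every $Q_j^\prime$ stays intact. Consequently the subpath orders coming from the $Q_j^\prime$ are exactly $Y = F_m^\prime \setminus X^\prime$, whence those coming from the $Q_i$ are exactly $X^\prime$; grouped by the component they lie in, they yield a partition of $X^\prime$ into $n$ triples each summing to $B^\prime$ — a distinct $3$-partition of $X^\prime$, and therefore of $X$.

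I expect the main difficulty to be already absorbed into \Cref{lemma:notopt}--\Cref{lemma:PartPFx1TO31}, namely forcing every comb structure $T^c_j$ to be burnt by a single, correctly timed fire source and thereby reducing an optimal burn of $IG(X)$ to an optimal burn of $P_I$ together with a partition of the path forest $P^{\prime\prime}$. Within the theorem itself the only delicate point is extracting the inequality $B^\prime \ge 4m-1$ from the distinct 3-partition constraints and combining it with the exact count $m = 3n+k$, so that the pigeonhole argument on the number of subpaths is forced to be tight; everything else is bookkeeping.
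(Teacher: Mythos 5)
Your proposal is correct and follows the paper's overall architecture exactly — the same reduction from distinct 3-partition to $(IG(X),2m+1)$, with \Cref{lemma:BNb(G)IG} giving the forward direction and the chain \Cref{lemma:EFSOnPath}, \Cref{lemma:AFSOnr_iIG}, \Cref{lemma:PartPFx1TO31} reducing an optimal burn to a partition of the path forest $P^{\prime\prime}$ into subpaths of orders $F_m^\prime$. Where you genuinely diverge is the final extraction of the 3-partition. The paper splits into cases: if every $Q_j^\prime$ is burnt by a single source it argues that no $Q_i$ can be covered by one source (since $B^\prime>2m-1$) or by two (parity: two odd cluster sizes sum to an even number while $B^\prime$ is odd) or by four or more (pigeonhole), forcing exactly three; and if some $Q_j^\prime$ is burnt by several sources it invokes an exchange argument that swaps those sources with a single source of matching cluster size elsewhere. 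Your counting argument replaces both cases at once: since $B^\prime\ge 4m-1$ exceeds the sum $(2m-1)+(2m-3)$ of the two largest available cluster sizes, each $Q_i$ needs at least three pieces, each $Q_j^\prime$ needs at least one, and $3n+k=m=|F_m^\prime|$ forces both bounds to be tight — so every $Q_j^\prime$ is automatically intact and every $Q_i$ is cut into exactly three pieces whose orders must come from $X^\prime=F_m^\prime\setminus Y$. This is cleaner and arguably more rigorous than the paper's exchange step, which is stated informally, and it dispenses with the parity argument entirely. You also make explicit two points the paper leaves implicit: membership in NP, and the fact that strong NP-completeness of distinct 3-partition is what licenses a construction of size polynomial in $m$.
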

\vspace{-0.3cm}
\begin{proof}
one part is already proved in Lemma \ref{lemma:BNb(G)IG}. Here we show the other part. Let say we have a $2m+1$ round optimal solution of the burning problem. 
Each $T^c_j$ must get burned by exactly one fire source as per Lemma \ref{lemma:AFSOnr_iIG}. From Lemma \ref{lemma:PartPFx1TO31}, we claim that the remaining path forest must be burned by the rest of available fire sources corresponding to the set $F'_m$.

Now, if each of the  $Q'_i$ is burned by a single fire source, then they must be burned by the fire sources corresponding to the integers belonging to set $Y$. Hence the remaining fire sources burning $Q_i$'s are burnt by fire sources corresponding to the integers belonging to the set $X'$. As the size of each $Q_i$ is $B'$, which is always odd ($B'=2B-3$), and also $B'> 2m -1$ (from the definition of distinct 3-partition problem, $m< B/2$), so no $Q_i$ can be burnt by a single fire source. Again it can not be burnt by two fire sources as sum of any two numbers in $X'$ are even. Also  no $Q_i$ can be burnt by 4 or more fire sources as then by pigeon whole principle there would be at least one $Q_i$ which needs to be burned by at most 2 fire sources, which is not possible. Hence each $Q_i$ must be burnt by exactly three fire sources.

Else, if $Q'_i$'s are not burned by single fire sources, we apply the following process subject to each subpath $Q^\prime_j$ for $1\leq j \leq k$. Let that some subpath $Q^\prime_j$ is burned using multiple fire sources. Since the given solution is optimal so these burning clusters are non overlapping. Not only that, the sum of the cluster sizes of these fire sources is exactly same as order of $Q^\prime_j$. Now some fire source with cluster size equal to order of $Q^\prime_j$ must be present on some other subpath. We can interchange that fire source (whose cluster size is $Q^\prime_j$) by these fire sources (which are presently burning $Q^\prime_j$). This way we can make each subpath $Q^\prime_j$ to be burnt by a single fire source whose cluster size is equal to $Q^\prime_j$. This takes $O(m)$ time. After this we again arrived to the case discussed above and we can see that each $Q_i$ are burnt by exactly three fire sources corresponding to the integers in $X'$. Therefore we have a solution of the distinct 3-partition problem whose input set is $X'$. This, in turn, gives us the solution of the distinct 3-partition problem on the input set $X$.

Therefore, we have reduced the burning problem of $IG(X)$ from the distinct 3-partition problem in pseudo-polynomial time. Since, the distinct 3-partition problem is NP-Complete in the strong sense, burning $IG(X)$ is also NP-Complete in the strong sense.
\qed\end{proof}

\section{Burning permutation graphs}\label{section:burn-pg}

First we define permutation graphs. A \textit{permutation graph} is constructed from an original sequence of objects $O = (1, 2, 3, . . ., k)$ which are numbers here and its permutation $P = (p_1, p_2, p_3, . . ., p_k)$ such that there is an edge between two vertices corresponding to number $i$ and $j$ respectively, if $i < j$ and $j$ occurs before $i$ in $P$.

We show in the following part of this section that burning permutation graphs is NP-Complete as well. The idea is similar to that of the interval graph. We start with an input $X$ of a  distinct 3-partition problem and reduce it to input $X'$ of another  distinct 3-partition problem. From that we construct a set of numbers $O$ and finally a permutation $P$ such that the permutation graph of $O$ w.r.t. $P$ becomes a path forest $P(X)$. This $P=$ $P_1$ $\cup_{s/}$ $P_2$ $\cup_{s/}$ $\cdots$ $\cup_{s/}$ $P_{n+k}$ is a sequential union of permutations $P_j$, such that the graph corresponds to $O$ and each $P_j$ forms a path. The path forest corresponding to $P$ is exactly similar to $P^{\prime\prime}$ that we used in the Lemma \ref{lemma:PartPFx1TO31}. More specifically, order wise path corresponding to each $P_j$ for all $1\le j \le n$ is same as the respective subpath $Q_i$ for all $1\le i \le n$ and path corresponding to  each $P_j$ for all $n+1\le j \le n+k$ is same as the respective subpath $Q'_i$ for all $1\le i\le k$ in $P^{\prime\prime}$. Then similar argument works here, i.e., if $X'$ has a solution for the  distinct 3-partition problem (and so is $X$), then we need to find that to burn the path forest corresponding to $P$,i.e., the permutation graph $P(X)$ optimally. So the main focus of this section is on the construction of such permutation graphs from given inputs of distinct 3-partition problem.

\subsection{Generation of arbitrary path forest from a permutation of numbers}

Let $X$ be an arbitrary multiset of $l=|X|$ positive integers. Let $O$ be the original sequence of numbers $1$ to $s(X)$.
Now, we are going to construct $|X|$ permutations $P_1, P_2,..., P_{l}$ in a specific manner such that these will produce path forest of $l$ disjoint simple paths. The permutation $P$ is simply the sequential union of the above permutations, i.e., $P=$ $P_1$ $\cup_{s/}$ $P_2$ $\cup_{s/}$ $\cdots$ $\cup_{s/}$ $P_{l}$.

Each $P_j =\{p^1_j,p^2_j,...,p^{t_j}_j\}$ is a permutation of the consecutive numbers $x_j$ to $y_j$ belonging to $O$ where $t_j=y_j-x_j+1$. The sets of such consecutive numbers those corresponds to the $l$ permutations, are pairwise disjoint. This makes the union of all paths corresponding to the permutations, a path forest. 

We provide a formula to compute $x_j$, $y_j$ as follows. Let $y_0=0$. $\forall\ 1\leq j\leq l$,
$x_j = y_{j-1}+1$ and  $y_j = y_{j-1} + X[j]$, where $X[j]$ is the $j^{th}$ element of $X$. This also implies that the total number of vertices in the permutation graph shall be $y_{l} = s(X)$.

We provide general formula to find $p^h_j$ for each $j$ and for all $h \in (1, t_j)$ such that
$P_j$ corresponds to a path.
\begin{itemize}
\item {\bf {For all}\boldmath {$\ 1\leq j \leq l$}\bf{ s.t.}\boldmath{ $t_j$}\bf{ is even with}\boldmath{ $|t_j|\ge 5$:}}
\subitem For all odd $h,\ 1 \leq h \leq (t_j-3), p^h_j = 2+(x_j+h-1)$. For the remaining odd value of $h=t_j-1$, $p^{t_j-1}_j = y_j$.
\subitem For all even $h, 4 \leq i \leq t_j, p^h_j = h-2$ and for the remaining even value of $h=2$, $p^2_j = x_j$\\

\item {\bf {For all} \boldmath {$\ 1\leq j \leq l$}\bf{ s.t.}\boldmath{ $t_j$}\bf{ is odd with}\boldmath{ $|t_j|\ge 5$:}}
\subitem For all odd $h,\ 1 \leq h \leq t_j-2, p^h_j = 2+(x_j+h-1)$. For the remaining odd value of  $h=t_j$,  $p^{t_j}_j = y_j-1$.
\subitem For all even $h, 4 \leq h \leq t_j-1, p^h_j = (x_j+h-1)-2$ and for the remaining even value of $h=2$, $p^2_j = x_j$\\

\item {\boldmath {$|t_j|\le 4$:}}
\subitem If $t_j=1$, then $P_j=(x_j)$.
\subitem If $t_j=2$, then $P_j=(y_j,x_j)$.
\subitem If $t_j=3$, then $P_j=(y_j,x_j,x_j+1)$.
\subitem If $t_j=4$, then $P_j=(x_j+1,y_j,x_j,x_j+2)$
\end{itemize}

We also provide an example of such a construction that returns a path.
Let $O = (1, 2, 3, 4, 5, 6, 7, 8)$. Then consider $P = (3, 1, 5, 2, 7, 4, 8, 6)$ be the subject permutation of $O$. The permutation graph $G$ formed from this pair $(O, P)$ is shown in Figure \ref{figure:example-permutation-graph}. In the figure, $v_i$ are the vertices corresponding to the object $i\in O$.

\begin{figure}[ht!]
	\centering
	\begin{tikzpicture}
	    \draw (0,0) -- (7,0);
	
	    \node [circle, fill=black, inner sep=0pt, minimum size=3pt, label=below:{$v_1$}] at (0,0) {};
		\node [circle, fill=black, inner sep=0pt, minimum size=3pt, label=below:{$v_3$}] at (1,0) {};
	    \node [circle, fill=black, inner sep=0pt, minimum size=3pt, label=below:{$v_2$}] at (2,0) {};
		\node [circle, fill=black, inner sep=0pt, minimum size=3pt, label=below:{$v_5$}] at (3,0) {};
	    \node [circle, fill=black, inner sep=0pt, minimum size=3pt, label=below:{$v_4$}] at (4,0) {};
	    \node [circle, fill=black, inner sep=0pt, minimum size=3pt, label=below:{$v_7$}] at (5,0) {};
	    \node [circle, fill=black, inner sep=0pt, minimum size=3pt, label=below:{$v_6$}] at (6,0) {};
		\node [circle, fill=black, inner sep=0pt, minimum size=3pt, label=below:{$v_8$}] at (7,0) {};
	\end{tikzpicture}
	\caption{Representation of permutation graph corresponding to $(O, P)$, where $O = (1, 2, 3, 4, 5, 6, 7, 8)$ and $P = (3, 1, 5, 2, 7, 4, 8, 6)$.}
	\label{figure:example-permutation-graph}
\end{figure}
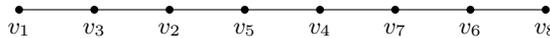

Now, $P$ $=$ $P_1$ $\cup_{s/}$ $...$ $\cup_{s/}$ $P_{l}$ $=$ $(p^1_1$, $...$, $p^{t_1}_1$, $p^1_2$, $...$, $p^{t_2}_2$, $...$, $p^1_{l}$, $...$, $p^{t_{l}}_{l})$ is the subject permutation of $O$.

We call $P(X)$ to be the permutation graph corresponding to the original sequence $O$, and its subject permutation $P$. $\forall\ 1\leq j\leq l$ let $Q_j$ be the subgraph in $P(X)$ induced by the permutation $P_j=(p^1_j,p^2_j,...,p^{t_j}_j)$ of the original sequence $(x_j,...,y_j)$. Observe that $P(X) = Q_1\cup Q_2\cup...\cup Q_{n+k}$ is a path forest where the paths $Q_1, Q_2,..., Q_{n+k}$ are disjoint from each other.

Although it immediately follows from \cite{Bessy2017} that burning permutation graphs is NP-Complete since we can construct any path forest from permutation graphs, we are giving a reduction of the distinct 3-partition problem in the following part of this section to show NP-Completeness.

\subsection{Permutation graph general construction for NP-Completeness}

Let $X$ be an input set to a  distinct 3-partition problem; let $n=\frac{|X|}{3}$, $m = \max (X)$, $B = \frac{s(X)}{n}$, and $k=m-3n$. Let $F_m$ be the set of first $m$ numbers, $F_m = \{1,2,3,...,m\}$, and $F^{\prime}_m$ be the set of first $m$ odd numbers, $F^\prime_m = \{2\ f_i-1: f_i \in F_m\} = \{1, 3, 5, . . ., 2m-1\}$. Let $X^\prime = \{2\ a_i-1:a_i \in X\}$, $B^\prime = \frac{s(X^\prime)}{n}$. Observe that $s(X^\prime) = \sum_{i=1}^{3n} 2\ a_i -1 = 2nB-3n$, so $B^\prime = 2B-3$. Let $Y=F^\prime_m\setminus X^\prime$. Let $O$ be the original sequence of numbers $1$ to $s(F^\prime_m)$, $O=(1,2,3,...,m^2)$.
Now, we are going to construct $n+k$ permutations $P_1, P_2,..., P_{n+k}$ in a specific manner such that these will produce path forest of $(n+k)$ disjoint simple paths, where $k=|Y|$. The permutation $P$ is simply the sequential union of the above permutations, i.e., $P=$ $P_1$ $\cup_{s/}$ $P_2$ $\cup_{s/}$ $\cdots$ $\cup_{s/}$ $P_{n+k}$.

Each $P_j =\{p^1_j,p^2_j,...,p^{t_j}_j\}$ is a permutation of the consecutive numbers $x_j$ to $y_j$ belonging to $O$ where $t_j=y_j-x_j+1$. The sets of such consecutive numbers those corresponds to the $(n+k)$ permutations, are pairwise disjoint. This makes the union of all paths corresponding to the permutations, a path forest. 
Below we first provide a formula to calculate $x_j$, $y_j$.
Since each of $P_1, P_2, \cdots, P_n$ is of order $B'$ (i.e., $t_j=B^\prime$ $\forall 1\le j \le n$), hence the $P_j$ is a permutation of the numbers $(j-1)\times B^\prime+1$ to $j\times B^\prime$. We define this formally in the following way. Let $y_0=0$. Then $\forall\ 1\leq j\leq n$, $x_j = y_{j-1}+1,$ and $y_j = j \times B^\prime$.
For the remaining permutations  $P_{n+1}, P_{n+2}, \cdots, P_{n+k}$, $\forall\ n+1\leq j\leq n+k$,
$x_j = y_{j-1}+1$ and  $y_j = y_{j-1} + L^Y_i$, where $i=j-n$ and $L^Y_i$ is the $i^{th}$ largest element of $Y$. Note that, $i$ varies from 1 to $k$.
Also observe the following,
\begin{align*}
y_{n+k} &= y_n + s(F^\prime_m\setminus X^\prime) = nB^\prime + s(F^\prime_m\setminus X^\prime)\\
&= s(X^\prime) + s(F^\prime_m\setminus X^\prime) = s(F^\prime_m) = m^2
\end{align*}
Hence, total number of elements in $\bigcup^{n+k}_{j=1} P_j$ is $m^2$.

Now, $P$ $=$ $P_1$ $\cup_{s/}$ $...$ $\cup_{s/}$ $P_{n+k}$ $=$ $(p^1_1$, $...$, $p^{t_1}_1$, $p^1_2$, $...$, $p^{t_2}_2$, $...$, $p^1_{n+k}$, $...$, $p^{t_{n+k}}_{n+k})$ $=$ $(p_1$, $p_2$, $...$, $p_{m^2})$ is the subject permutation of $O$.

We call $P(X)$ to be the permutation graph corresponding to the original sequence $O$, and its subject permutation $P$. $\forall\ 1\leq j\leq n+k$ let $Q_j$ be the subgraph in $P(X)$ induced by the permutation $P_j=(p^1_j,p^2_j,...,p^{t_j}_j)$ of the original sequence $(x_j,...,y_j)$. Observe that $P(X) = Q_1\cup Q_2\cup...\cup Q_{n+k}$ is a path forest where the paths $Q_1, Q_2,..., Q_{n+k}$ are disjoint from each other.
Now we prove the NP-Completeness result which mostly follows from the earlier proof on the interval graphs.

\subsection{Example Construction}

Let $X=\{10,11,12,14,15,16\} \implies n=2,\ m = 16,\ B = 39,$ and $k=10$. $F_m=\{1,2,...,16\}$, and $F_m^\prime=\{1,3,...,31\}$. $X^\prime = \{19,21,23,27,29,31\}$, $B^\prime = 75=2B-3$. $Y=\{1,3,5,7,9,11,13,15,17,25\}$.

We finally form paths $Q_1$ and $Q_2$ each of order of $75$. Also, we form paths $Q_3,Q_4,...,Q_{12}$ of order of $25,17,15,13,11,9,7,5,3,1$ respectively. $P(X)$ is a path forest of the paths $Q_1,...,Q_{12}$, which are disjoint from each other. Burning number of $P(X)$ in this case is $m=16$.

\subsection{NP-Completeness}

The logic is similar to that of the interval graph.
The burning number of $P(X)$ is $m$ if $X'$ has a solution of the  distinct 3-partition problem. As then length $B'$ of each of $Q_1, Q_2,...Q_n$ can be written as a sum of three odd numbers from $X'$. And length of each of the remaining $Q_{n+1}, Q_{n+2},..., Q_{n+k}$ are also odds as each of the lengths corresponds to an element in $Y$. So, to burn $P(X)$ in $m$ steps, one needs to solve the  distinct 3-partition problem on $X'$ (equivalently on $X$).

Similarly if we have an optimal burning sequence of $P(X)$, we get a solution of the distinct 3-partition problem for $X$ as discussed the last paragraph of the proof of Theorem \ref{theorem:BIGNPCIG}. This leads us to the following theorem.

\begin{theorem}\label{theorem:NPCPG}
    Burning of general permutation graphs is NP-Complete.
\end{theorem}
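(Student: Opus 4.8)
The plan is to mirror exactly the structure of the interval-graph proof (Theorem~\ref{theorem:BIGNPCIG}) but working directly with the path forest $P(X)$ instead of $IG(X)$. First I would establish membership in NP: given a burning sequence of length $m$, one checks in polynomial time (via \Cref{equation:WGBC}, computing $G.N_{m-i}[x_i]$ in the path forest) that it burns all $m^2$ vertices. For the hardness reduction, I start from an instance $X$ of distinct 3-partition, pass to $X'=\{2a_i-1: a_i\in X\}$ (all odd, with $B'=2B-3$ odd and $B'>2m-1$), form $Y=F'_m\setminus X'$, and build the permutation $P$ and its permutation graph $P(X)=Q_1\cup\cdots\cup Q_n\cup Q_{n+1}\cup\cdots\cup Q_{n+k}$ using the explicit formulas of the preceding subsection, where $|Q_j|=B'$ for $1\le j\le n$ and $|Q_{n+i}|$ equals the $i$-th largest element of $Y$. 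The key structural fact, which I would need to verify from the construction formulas, is that each $P_j$ indeed induces a simple path $Q_j$ and the $Q_j$ are pairwise disjoint, so $P(X)$ is precisely a path forest whose multiset of path orders is $X'\cup Y=F'_m$; this is the place where I borrow that $P(X)$ is ``exactly similar to $P''$'' from Lemma~\ref{lemma:PartPFx1TO31}.

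Next I would prove the forward direction: if $X'$ has a distinct 3-partition, then $b(P(X))=m$. Since $\sum_{i=1}^m(2i-1)=m^2=|V(P(X))|$, and \Cref{observation:path} tells us the burning clusters on a path of a perfect-square length are disjoint and of consecutive odd sizes $1,3,\dots,2m-1$, a burning sequence of length $m$ exists iff we can cut $P(X)$ into $m$ subpaths of orders exactly $1,3,\dots,2m-1$. Each $Q_{n+i}$ (order in $Y$) is absorbed whole by one fire source; each $Q_j$ of order $B'$ is cut into three pieces according to the triple summing to $B'$ in the 3-partition of $X'$. Placing the $i$-th largest resulting subpath's fire source at its $(2m-i+2)$-th vertex gives a valid burning sequence of length $m$, so $b(P(X))\le m$; and $b(P(X))\ge m$ because some component (indeed the largest, of order $B'>2m-1\ge$ any single odd burnable length in $m-1$ rounds) forces at least $m$ sources — more cleanly, the total vertex count $m^2$ against the bound $\sum_{i=1}^{m-1}(2i-1)=(m-1)^2<m^2$ already forces $b(P(X))\ge m$.

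For the reverse direction I would replay the final paragraph of the proof of Theorem~\ref{theorem:BIGNPCIG} verbatim in spirit: given an optimal length-$m$ burning sequence, optimality plus disjointness of components forces every fire source's cluster to be a full subpath, and the multiset of cluster orders is exactly $F'_m=\{1,3,\dots,2m-1\}$. Each $Q_{n+i}$ has order in $Y$; if some $Q_{n+i}$ is split among several sources, I perform the swap argument (exchange the split sources with the single source whose cluster order equals $|Q_{n+i}|$, which must exist elsewhere) in $O(m)$ time, reducing to the case where every $Q_{n+i}$ is burnt by one source whose order is its element of $Y$. The remaining sources, with orders exactly $X'$, burn the $Q_1,\dots,Q_n$; since $B'$ is odd and $B'>2m-1$, no $Q_j$ is burnt by one source; since any two elements of $X'$ sum to an even number $\ne B'$, none by two; and by pigeonhole none by four or more. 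Hence each $Q_j$ splits into exactly three pieces whose orders are a triple in $X'$ summing to $B'$ — a distinct 3-partition of $X'$, equivalently of $X$. Since distinct 3-partition is strongly NP-complete and the construction is pseudo-polynomial, burning permutation graphs is strongly NP-complete.

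The main obstacle I anticipate is the first structural step: checking that the piecewise permutation formulas for $p^h_j$ (the even/odd cases with $|t_j|\ge 5$ and the small cases $t_j\le 4$) really do produce a single path on $\{x_j,\dots,y_j\}$ with the correct adjacencies in the permutation-graph sense, and that there are no spurious edges between different blocks $P_j$ and $P_{j'}$ (which holds because the number ranges $[x_j,y_j]$ are consecutive and disjoint, so for $j<j'$ every number in block $j'$ is larger and occurs later — no inversion — hence no edge). Once this is nailed down, everything else is a transcription of the interval-graph argument with $T^c_j$-handling deleted, since here there are no comb gadgets and the whole graph is already the path forest $P''$.
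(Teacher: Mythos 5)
Your proposal follows essentially the same route as the paper: the paper's own proof of Theorem~\ref{theorem:NPCPG} is a short sketch that builds the path forest $P(X)$ (with $n$ components of order $B'$ and $k$ components of orders in $Y$) and then points back to the final paragraph of the proof of Theorem~\ref{theorem:BIGNPCIG} for both directions, exactly as you do. The only slip is your passing claim that the multiset of path orders is $X'\cup Y=F'_m$ --- it is in fact $n$ copies of $B'$ together with $Y$, as you correctly state elsewhere --- and your added details (NP membership, the $(m-1)^2<m^2$ lower bound, and the check that distinct blocks $P_j$ induce no inversions and hence no spurious edges) are consistent with, and somewhat more careful than, the paper's sketch.
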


\section{Discussion}\label{section:conclude}

\subsection{Some more Hardness Results}\label{sec:hardgeometric}

In this section we report hardness results on few more graph classes that mostly follow from our result on the interval graph.
A \textit{disk graph} is formed from an arrangement of disks on a Euclidean plane such that there is a vertex in the disk graph corresponding to each disk, and if there is an overlap between a pair of disks, then there shall be an edge between their corresponding vertices in the disk graph. Since any interval graph is valid to be a disc graph, we have the following.
\vspace{-0.2cm}
\begin{corollary}
    Optimal burning of disc graphs is NP-Complete even if the underlying disc representation is given.
\end{corollary}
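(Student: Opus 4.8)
The plan is to derive this immediately from Theorem~\ref{theorem:BIGNPCIG} together with the classical fact that every interval graph is a disc graph. First I would dispatch membership in NP, which is class-independent: a burning sequence has length at most $|V|$, so it is a certificate of polynomial size, and verifying that the clusters $G.N_{k-i}[x_i]$ cover $V(G)$ (i.e.\ that \eqref{equation:WGBC} holds) amounts to a handful of breadth-first searches, hence is polynomial-time checkable. The same remark also yields NP-membership for interval and permutation graphs, so it is worth stating once.

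For hardness I would reuse, verbatim, the interval graph $IG(X)$ built in Section~\ref{section:burn-interval-graphs} from a distinct $3$-partition instance $X$, and convert its interval model into a disc model: replace each interval $[a,b]\subseteq\mathbb{R}$ by the closed disc of radius $(b-a)/2$ centered at $\big((a+b)/2,\,0\big)$. Two such discs meet precisely when the corresponding intervals overlap, so the disc graph of this arrangement is exactly $IG(X)$. Because distinct $3$-partition is NP-complete in the strong sense, the integers in $X$, the quantity $m=\max(X)$, and therefore all $O(m^2)$ endpoints used in the construction are bounded by a polynomial in the size of $X$; the centers and radii thus have polynomially many bits, and the disc representation is produced in polynomial time and handed to the algorithm together with the instance. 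Combining this with Theorem~\ref{theorem:BIGNPCIG} shows that deciding $b(D)\le k$ for a disc graph $D$ given with its disc representation is NP-hard, and with the NP-membership above it is NP-complete.

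I do not anticipate a genuine obstacle. The one place deserving a sentence of care is the bit-complexity observation, namely that strong NP-completeness of the source problem keeps the whole construction polynomially sized, so supplying the disc representation is not circular; a secondary, purely cosmetic point is the open/closed convention for ``overlap'', which is harmless since the endpoints may be taken to be (half-)integers and a single convention (touching counts as overlap) matches the one used for intervals. One could instead reduce directly from distinct $3$-partition to discs, bypassing $IG(X)$, but recycling $IG(X)$ is the shortest route.
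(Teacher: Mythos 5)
Your proposal is correct and follows the same route as the paper, which simply notes that every interval graph is a disc graph and invokes Theorem~\ref{theorem:BIGNPCIG}; your added details (the explicit centres-on-a-line disc model and the bit-complexity remark justifying that the representation can be supplied in polynomial time, which is what the ``even if the representation is given'' clause requires) are a welcome but non-essential elaboration of the same argument.
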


In a \textit{unit distance graph}, the edges can be drawn in a euclidean plane such that each edge is of unit length. In \textit{matchstick graph}, the edges can be drawn in a euclidean plane such that each edge is of unit length and they do not intersect each other. The graph class that we have constructed is valid to be a unit distance graph and a matchstick graph. So we have \Cref{corollary:unit-distance-matchstick} as follows.

\begin{corollary}\label{corollary:unit-distance-matchstick}
    Optimal burning of unit distance graphs and matchstick graphs is NP-Complete.
\end{corollary}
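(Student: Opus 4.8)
The plan is to observe that no new gadget is required: the graph $IG(X)$ produced by the reduction of \Cref{section:burn-interval-graphs} already lies in both target classes, so the hardness is inherited verbatim from \Cref{theorem:BIGNPCIG}. First I would record the structural fact that $IG(X)$ is a \emph{caterpillar}: its vertex set splits into the spine path $P_I$ together with the pendant sets $A^{T_j}$, and every pendant vertex is a leaf attached to an internal vertex of a subpath $T_j$ of $P_I$, with at most one pendant per spine vertex. (If one prefers to argue about the permutation construction of \Cref{section:burn-pg} instead, the path forest $P(X)$ is an even simpler special case — a disjoint union of paths.)

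Next I would exhibit an explicit straight-line, unit-length drawing of this caterpillar. Place the $(2m+1)^2$ vertices of $P_I$ at the integer points $(0,0),(1,0),(2,0),\dots$ on the $x$-axis, so every spine edge is a horizontal segment of length exactly $1$; place the pendant attached to the spine vertex at $(i,0)$ at the point $(i,1)$, so every pendant edge is a vertical segment of length exactly $1$. Distinct spine vertices get distinct $x$-coordinates, and all pendants lie on the line $y=1$ over the $x$-coordinate of their unique neighbour, so the drawing uses $|V(IG(X))|$ distinct points and all edges have unit length; moreover the horizontal segments lie on $y=0$, the vertical segments are pairwise disjoint and meet $y=0$ only at their lower endpoints (which are spine vertices), so no two edges cross. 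Hence $IG(X)$ is simultaneously a unit distance graph and a matchstick graph, and these coordinates are computable in polynomial time from $X$.

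Finally I would package this as an NP-completeness statement. Membership in NP is immediate for every graph class here: a burning sequence of length $k$ is a polynomial-size certificate, verified by computing each ball $G.N_{k-i}[x_i]$ and checking that their union is $V(G)$, exactly \cref{equation:WGBC}. For hardness, composing ``distinct 3-partition $\to IG(X)$'' from \Cref{theorem:BIGNPCIG} with the drawing above gives a pseudo-polynomial reduction from distinct 3-partition to burning whose output is always a unit distance graph and a matchstick graph; since distinct 3-partition is NP-complete in the strong sense, burning on each of these classes is NP-complete (in the strong sense), which is \Cref{corollary:unit-distance-matchstick}.

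The only point that needs any care is the no-crossing verification for the matchstick drawing — one must be sure that hanging the pendants on $P_I$ introduces no edge crossings — but because each spine vertex carries at most one pendant and every pendant is pushed onto a single line parallel to the spine, this collapses to the trivial observation that parallel unit segments erected over distinct base points are pairwise disjoint, so no genuine obstacle arises; the substance of the result is entirely contained in \Cref{theorem:BIGNPCIG}.
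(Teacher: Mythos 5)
Your proposal is correct and matches the paper's argument, which likewise just observes that $IG(X)$ from Theorem~\ref{theorem:BIGNPCIG} already belongs to both classes; you merely make explicit the unit-length, non-crossing caterpillar embedding (spine on integer points of the $x$-axis, one pendant per spine vertex lifted to $y=1$) that the paper leaves implicit. Note that this works because the paper's definition of unit distance graph only requires the edges to be drawable at unit length, not that unit-distance pairs be exactly the edges.
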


\subsection{Conclusion and future work}

In this article we show a lower bound for the burning number of grids of arbitrary size and give a two approximation algorithm for burning square grids. We also show that the graph burning problem is NP-Complete on interval graphs  and permutation graphs along with several corollaries that show NP-Completeness of burning other geometric graphs as well.

It remains an open question whether burning grids is an NP-Complete problem. Another related direction is to try and improve the 3-approximation algorithm provided in \cite{Bessy2017} for burning general graphs.

\bibliography{ref.bib}
\bibliographystyle{plain}

\end{document}